
\documentclass[conference]{IEEEtran}
%

\usepackage{amssymb,amsmath,amsthm,mathrsfs}
\usepackage{graphics}

\usepackage[utf8]{inputenc}
\usepackage[english]{babel}
 \usepackage{hyperref}

\newtheorem{theorem}{Theorem}

\newtheorem{lemma}[theorem]{Lemma}
\newtheorem{prop}[theorem]{Proposition}
\theoremstyle{definition}\newtheorem{definition}[theorem]{Definition}
\theoremstyle{remark}\newtheorem{remark}[theorem]{Remark}


%

%

%
\ifCLASSINFOpdf
\else
\fi
\hyphenation{op-tical net-works semi-conduc-tor}

\begin{document}
%
\title{On The Construction of Capacity-Achieving Lattice Gaussian Codes}

\author{\IEEEauthorblockN{Wael Alghamdi, Walid Abediseid, and Mohamed-Slim Alouini}
\IEEEauthorblockA{Computer, Electrical and Mathematical Sciences and Engineering (CEMSE) Division,\\ King Abdullah University of Science and Technology (KAUST),Thuwal, Makkah Province, Saudi Arabia,\\E-mail: wael.alghamdi, walid.abediseid, slim.alouini@kaust.edu.sa}
}


%


\maketitle

\begin{abstract}
In this paper, we propose a new approach to proving results regarding channel coding schemes based on construction$-$A lattices for the Additive White Gaussian Noise (AWGN) channel that yields new characterizations of the code construction parameters, i.e., the primes and dimensions of the codes, as functions of the block-length. The approach we take introduces an averaging argument that explicitly involves the considered parameters. This averaging argument is applied to a generalized Loeliger ensemble \cite{Loeliger} to provide a more practical proof of the existence of AWGN-good lattices, and to characterize suitable parameters for the lattice Gaussian coding scheme proposed by Ling and Belfiore \cite{Ling_Belf}.
\end{abstract}


%
\IEEEpeerreviewmaketitle

\section{Introduction}
An explicit construction of a structured coding scheme that achieves the capacity of the additive white Gaussian noise (AWGN) channel has been a major problem in coding theory lately. Shannon first proved, via averaging over all possible codebooks of a certain blocklength, that there are coding schemes that achieve the capacity of the AWGN channel \cite{Shannon}. In \cite{Poltyerv}, Poltyerv showed, via an averaging argument, that linear codes can achieve the capacity of the unconstrained AWGN channel. A new line of study was initiated in \cite{Loeliger} when Loeliger showed that construction$-$A lattices can be made to behave like a Minkowski-Hlawka-Siegel (MHS) ensemble. This was used by Erez and Zamir in \cite{Erez_Zamir} to show that nested construction$-$A lattices with dithering can achieve the capacity of the  AWGN channel, and by Ling and Belfiore in \cite{Ling_Belf} to show that a lattice Gaussian coding scheme based on construction$-$A lattices can achieve the capacity of the AWGN channel without the need of dithering. However, a practical piece of the puzzle remains missing, which is the treatment of the parameters defining the construction$-$A lattices used in the coding schemes. The work by Loeliger \cite{Loeliger} uses the property that construction$-$A lattices can be made to behave like an MHS ensemble as an input to the MHS theorem, which destroys the explicitness of the parameters involved. 

In this paper, we resolve this issue by refraining from using the MHS theorem. We show that asymptotic results regarding a Riemann theta function and a Pochhammer symbol suffice to get stronger versions of the previously known results and new characterizations of the primes and dimensions of the construction$-$A lattices that are used to build capacity-achieving codes.

We use the following notations. The symbol $\log$ always refers to the natural logarithm, and information is measured in nats. For any set $S,$ $|S|$  denotes the number of elements in $S,$ $1_S$ the indicator function of $S$ and $\mathcal{P}(S)$  the power set of $S.$ The notation $\|\cdot\|$ will always refer to the $2-$norm. The symbol $0$ will refer to either a scalar (in $\mathbb{R}$ or $\mathbb{F}_p$), a vector (in $\mathbb{R}^n$ or $\mathbb{F}_p^n$) or a matrix (over $\mathbb{R}$ or $\mathbb{F}_p$), but it will be clear from context which is the meaning referred to. We will use $\mu_L$ to refer to the Lebesgue measure over $\mathbb{R}^n$ for any fixed $n,$ which will be clear from the context. Also, for any natural $n,$ point $q\in \mathbb{R}^n$ and $r>0,$ we will denote by $\mathcal{B}_n(q,r)$ the open ball in $\mathbb{R}^n$ of radius $r$ around $q.$

\section{Preliminaries}

We develop in this section the mathematical tools we  need.

\subsection{Lattices and Lattice Ensembles}

A lattice in $\mathbb{R}^n$ is a set $\Lambda =\lbrace B x \; ; \; x \in \mathbb{Z}^n \rbrace,$ where $B \in \mathbb{R}^{n\times n}$ is full-rank. To any lattice $\Lambda$ in $\mathbb{R}^n,$ one may associate the (uniformly convergent over every $[\delta,\infty)\subset (0,\infty)$) theta series $\Theta_{\Lambda}\left(\tau\right) := \sum_{\lambda \in \Lambda} e^{-\pi \tau\|\lambda\|^2}.$ We denote a fundamental Voronoi region of $\Lambda$ by $\mathcal{V}(\Lambda)\subset \mathbb{R}^n$ (which differs from the set $\lbrace y\in \mathbb{R}^n \; ; \; \min_{\lambda \in \Lambda} \|y-\lambda\|=\|y\|\rbrace$ by a set of measure $0$), and the dual lattice by $\Lambda^*.$ The following is a classical result.

\begin{theorem}[Theta Series Functional Equation] \label{LT} For any lattice $\Lambda\subset \mathbb{R}^n$ and any $t>0,$
$$
\Theta_\Lambda(t) = t^{-n/2}\mu_L(\mathcal{V}(\Lambda))^{-1}\Theta_{\Lambda^*}(t^{-1}).
$$
\end{theorem}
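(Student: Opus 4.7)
The plan is to invoke the Poisson summation formula for lattices, applied to the Gaussian $f_t(x) = e^{-\pi t \|x\|^2}$. Specifically, for any Schwartz function $f$, Poisson summation over $\Lambda \subset \mathbb{R}^n$ asserts that
$$\sum_{\lambda \in \Lambda} f(\lambda) = \frac{1}{\mu_L(\mathcal{V}(\Lambda))} \sum_{\lambda^* \in \Lambda^*} \widehat{f}(\lambda^*),$$
where $\widehat{f}(\xi) = \int_{\mathbb{R}^n} f(x) e^{-2\pi i \langle x,\xi\rangle}\, dx$. Since Gaussians are Schwartz, the hypotheses are met, and the two theta series in the statement converge absolutely for any $t > 0$ by comparison with a Gaussian tail, so no delicate interchange of limits is needed.

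The key calculation is the self-duality (up to scaling) of the Gaussian under the Fourier transform. In one dimension I would compute $\widehat{f}_t(\xi)$ by completing the square in the exponent and shifting the contour of integration (justified by entirety and rapid decay), which gives $t^{-1/2}e^{-\pi \xi^2/t}$. The $n$-dimensional version $\widehat{f}_t(\xi) = t^{-n/2} e^{-\pi \|\xi\|^2/t}$ then follows immediately by Fubini. Substituting this into Poisson summation and pulling out the $\xi$-independent factor $t^{-n/2}$ produces exactly the claimed identity
$$\Theta_\Lambda(t) = \frac{t^{-n/2}}{\mu_L(\mathcal{V}(\Lambda))} \sum_{\lambda^* \in \Lambda^*} e^{-\pi \|\lambda^*\|^2 / t} = t^{-n/2}\mu_L(\mathcal{V}(\Lambda))^{-1} \Theta_{\Lambda^*}(t^{-1}).$$

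The one subtlety is deriving Poisson summation over a general lattice from the standard version over $\mathbb{Z}^n$. Writing $\Lambda = B\mathbb{Z}^n$, I would apply the classical identity to the pullback $g(y) := f(By)$, whose Fourier transform is $\widehat{g}(\eta) = |\det B|^{-1}\widehat{f}\bigl((B^{-1})^T \eta\bigr)$. Since $\Lambda^* = (B^{-1})^T \mathbb{Z}^n$ and $\mu_L(\mathcal{V}(\Lambda)) = |\det B|$, reindexing the dual sum gives the lattice form of Poisson summation with the correct covolume factor. This step is routine but is the one place where care must be taken, and it is the only real obstacle; once it is in hand, Theorem \ref{LT} is immediate from the Gaussian Fourier pair.
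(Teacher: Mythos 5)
Your proof is correct, and it is the standard argument for this functional equation: apply Poisson summation over the lattice $\Lambda$ to the Gaussian $f_t(x)=e^{-\pi t\|x\|^2}$, use the Fourier self-duality $\widehat{f_t}(\xi)=t^{-n/2}e^{-\pi\|\xi\|^2/t}$, and reduce lattice Poisson summation to the $\mathbb{Z}^n$ case via the change of variables $g(y)=f(By)$ with the resulting $|\det B|^{-1}$ and $(B^{-1})^T$ factors. Note that the paper itself offers no proof of Theorem~\ref{LT}; it is cited as a classical result, so there is nothing in the text to compare against, but your argument is complete and the one found in standard references, with the absolute convergence remark and the contour-shift justification for the Gaussian Fourier pair correctly identified as the only points requiring care.
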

Since an integer lattice $\mathbb{Z}^n$ is self-dual and satisfies $\mu_L(\mathcal{V}(\mathbb{Z}^n))=\mu_L([-1/2,1/2]^n)=1,$ theorem \ref{LT} yields that, for any positive integer $n$ and positive real $t,$
\begin{equation}\label{TS}
\Theta_{\mathbb{Z}^n}(t)=t^{-n/2}\Theta_{\mathbb{Z}^n}(t^{-1})
\end{equation}

A linear code is a set $C(M):=\lbrace Mx \; ; \; x\in \mathbb{F}_q^{k} \rbrace$ where $q$ is a prime power and $ M\in \mathbb{F}_q^{n\times k}.$ If $p$ is prime and $C(M) \subset \mathbb{F}_p^n$ is a linear code, one may show that the Minkowski sum $\Lambda(M):=C(M)+p\mathbb{Z}^n$ is a lattice. Such a lattice is called a construction$-$A, or mod$-p$ lattice. 

Let $\mathscr{P}$ denote the set of prime numbers. For any integer $n\geq 2$ and $(k,p,a)\in \lbrace 1,\cdots,n-1 \rbrace\times \mathscr{P}\times \mathbb{R}_{>0},$ we call $(n,k,p,a)$ a \textit{quadruple of parameters}, and we denote it usually by $\mathfrak{p}.$ For any quadruple of parameters $\mathfrak{p}=(n,k,p,a),$ denote $V_{\mathfrak{p}}=a^np^{n-k}.$ Note that, if $1\leq k \leq n$ and $M\in \mathbb{F}_p^{n\times k}$ is full-rank, then $\mu_L(\mathcal{V}(a\Lambda(M))) = V_{(n,k,p,a)}.$

For any quadruple of parameters $\mathfrak{p}=(n,k,p,a)$  and random variable $G$ over $\mathbb{F}_p^{n\times k},$ we use the following notation. Let  $M_\mathfrak{p}\subset \mathbb{F}_p^{n\times k}$ denote the subset of all full-rank matrices. Define $U_\mathfrak{p}'$ and $U_\mathfrak{p}$ to be random matrices uniformly distributed over $\mathbb{F}_p^{n\times k}$ and $M_\mathfrak{p},$ respectively, and $u_\mathfrak{p}$ to be a random vector uniformly distributed over $\mathbb{F}_p^k.$ One may also consider the random lattice $\Lambda(G)$ (see part \hyperref[AC]{1} of Appendix A). We denote $\Lambda_\mathfrak{p}'=a\Lambda(U_\mathfrak{p}')$ and $\Lambda_\mathfrak{p}=a\Lambda(U_\mathfrak{p})$ for short (note that $\Lambda_\mathfrak{p}'$ is a Loeliger ensemble). We set $\xi^{\max}(G)=\max_{y\in \mathbb{F}_p^{n}\setminus \lbrace 0 \rbrace}\mathrm{Pr}(Gu_\mathfrak{p}=y)$ and $ \xi^{(0)}(G)=\mathrm{Pr}(Gu_\mathfrak{p}=0).$ Denote $\xi_\mathfrak{p}=\xi^{(0)}(U_\mathfrak{p}')$ for short, and note that $\xi^{\max}(U_\mathfrak{p}')=\frac{1-\xi_\mathfrak{p}}{p^n-1}.$ Also, for any $M\in \mathbb{F}_p^{n\times k},$ we have that $M0=0,$ so $1/p^k\leq \xi_\mathfrak{p}.$

\subsection{An Averaging Argument}

The following inequality is used to derive an averaging argument for lattice sums in proposition \ref{av}.

\begin{lemma} \label{ineq} For any quadruple of parameters $\mathfrak{p}=(n,k,p,a),$ $M\in \mathbb{F}_p^{n\times k}$ and $s:\mathbb{R}^n\longrightarrow [0,\infty],$ we have that
$$
\sum_{\lambda \in \Lambda(M)}  s(\lambda) \leq p^{k} \sum_{y\in \mathbb{F}_{p}^{n}}\sum_{z\in \mathbb{Z}^n}  \mathrm{Pr}(Mu_{\mathfrak{p}}=y) \cdot s(y+pz).
$$
\end{lemma}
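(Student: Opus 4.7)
The plan is to unfold the definition of $\Lambda(M)=C(M)+p\mathbb{Z}^n$ so that both sides become sums indexed by pairs $(y,z)\in\mathbb{F}_p^n\times\mathbb{Z}^n$, and then compare coefficients. Identifying $\mathbb{F}_p^n$ with $\{0,1,\dots,p-1\}^n\subset\mathbb{Z}^n$ via the canonical representatives, every $\lambda\in\mathbb{R}^n$ admits a unique decomposition $\lambda=y+pz$ with $y\in\{0,\dots,p-1\}^n$ and $z\in\mathbb{Z}^n$ (reduction mod $p$ coordinate-wise). Moreover $\lambda\in\Lambda(M)$ if and only if the corresponding $y$ lies in $C(M)$, because shifts by $p\mathbb{Z}^n$ stay within the same mod-$p$ coset.

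Using this bijection I would first rewrite the left-hand side as
$$
\sum_{\lambda\in\Lambda(M)} s(\lambda) \;=\; \sum_{y\in C(M)}\sum_{z\in\mathbb{Z}^n} s(y+pz).
$$
Next I would observe that for any $y\in\mathbb{F}_p^n$,
$$
\mathrm{Pr}(Mu_\mathfrak{p}=y) \;=\; \frac{|\{x\in\mathbb{F}_p^k:Mx=y\}|}{p^k},
$$
so the probability is $0$ when $y\notin C(M)$, and is at least $1/p^k$ when $y\in C(M)$ since the preimage contains at least one element. Consequently $p^k\,\mathrm{Pr}(Mu_\mathfrak{p}=y)\geq 1_{C(M)}(y)$ for every $y\in\mathbb{F}_p^n$.

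Multiplying this pointwise inequality by the nonnegative quantity $\sum_{z\in\mathbb{Z}^n} s(y+pz)$ and summing over $y\in\mathbb{F}_p^n$ yields
$$
\sum_{y\in C(M)}\sum_{z\in\mathbb{Z}^n} s(y+pz) \;\leq\; p^k\sum_{y\in\mathbb{F}_p^n}\sum_{z\in\mathbb{Z}^n}\mathrm{Pr}(Mu_\mathfrak{p}=y)\,s(y+pz),
$$
which combined with the first rewriting gives the claim. The only subtle point is making sure the decomposition $\lambda=y+pz$ is unique so that the passage from a sum over $\lambda\in\Lambda(M)$ to a double sum over $(y,z)$ introduces no overcounting; once the canonical-representatives convention is fixed this is immediate, and the rest of the argument is an elementary pointwise comparison. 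The inequality is actually an equality precisely when $M$ is injective (i.e.\ when $k\leq n$ and $M$ has trivial kernel over $\mathbb{F}_p$), which matches the intuition that the slack measures the multiplicity with which codewords are hit.
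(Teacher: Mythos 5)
Your argument follows the paper's proof exactly: both decompose $\sum_{\lambda\in\Lambda(M)} s(\lambda)$ via the unique representation $\lambda = y + pz$ with $y$ ranging over canonical representatives of $\mathbb{F}_p^n$, and both apply the pointwise bound $1_{C(M)}(y) \le p^k\,\mathrm{Pr}(Mu_\mathfrak{p}=y)$ before summing over $y$ and $z$. The only small imprecision is the closing aside about equality holding precisely when $M$ is injective, which also implicitly requires $s$ not to vanish on the relevant cosets (e.g.\ $s\equiv 0$ gives equality for any $M$); this does not affect the proof itself.
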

\begin{proof}
This follows from
\begin{align*}
\sum_{\lambda \in \Lambda(M)}  s(\lambda) =\sum_{z\in \mathbb{Z}^n} \sum_{y\in \mathbb{F}_{p}^{n}}  s(y+pz) \cdot 1_{C(M)}(y)
\end{align*}
and $1_{C(M)}(y)\leq |\lbrace x\in \mathbb{F}_{p}^{k}  ; Mx=y \rbrace| = p^{k}  \mathrm{Pr}(Mu_{\mathfrak{p}}=y).$
\end{proof}

\begin{prop} \label{av} For any quadruple of parameters $\mathfrak{p}=(n,k,p,a),$ random variable $G$ over $\mathbb{F}_p^{n\times k}$ and $g:\mathbb{R}^n\longrightarrow [0,\infty],$ we have that 
$$
\mathbb{E}_{G}\left[ \sum_{\lambda \in\Lambda(G)} g(\lambda) \right] \leq p^k \mathbb{E}_{Gu_\mathfrak{p}}\left[ \sum_{z\in \mathbb{Z}^n} g(Gu_\mathfrak{p}+pz) \right].
$$
\end{prop}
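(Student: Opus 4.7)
The plan is to use Lemma \ref{ineq} as a deterministic pointwise bound and then take the expectation over the random generating matrix $G$, swapping expectation with the (countable) sums on the right-hand side.

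First, I would fix a realization $M \in \mathbb{F}_p^{n\times k}$ of $G$ and apply Lemma \ref{ineq} to obtain
$$
\sum_{\lambda \in \Lambda(M)} g(\lambda) \leq p^k \sum_{y\in \mathbb{F}_p^n}\sum_{z\in \mathbb{Z}^n} \mathrm{Pr}(Mu_\mathfrak{p}=y)\cdot g(y+pz),
$$
where the probability on the right-hand side is with respect to $u_\mathfrak{p}$ only. I would then take $\mathbb{E}_G$ of both sides. Since $g\ge 0$ and the factor $\mathrm{Pr}(Mu_\mathfrak{p}=y)$ is non-negative, Tonelli's theorem allows the expectation over $G$ to be moved inside the sums over $y$ and $z$, yielding
$$
\mathbb{E}_G\!\left[\sum_{\lambda\in \Lambda(G)}g(\lambda)\right] \leq p^k \sum_{y\in\mathbb{F}_p^n}\sum_{z\in\mathbb{Z}^n} \mathbb{E}_G\bigl[\mathrm{Pr}(Gu_\mathfrak{p}=y\mid G)\bigr]\cdot g(y+pz).
$$

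Next, assuming (as the notation $Gu_\mathfrak{p}$ in the statement suggests) that $G$ and $u_\mathfrak{p}$ are independent, the tower property gives $\mathbb{E}_G[\mathrm{Pr}(Gu_\mathfrak{p}=y \mid G)] = \mathrm{Pr}(Gu_\mathfrak{p}=y)$, where the probability now refers to the joint distribution. Substituting this identity collapses the sum over $y$ into an expectation over the random vector $Gu_\mathfrak{p}\in\mathbb{F}_p^n$:
$$
\sum_{y\in\mathbb{F}_p^n}\mathrm{Pr}(Gu_\mathfrak{p}=y)\sum_{z\in\mathbb{Z}^n} g(y+pz) = \mathbb{E}_{Gu_\mathfrak{p}}\!\left[\sum_{z\in\mathbb{Z}^n}g(Gu_\mathfrak{p}+pz)\right],
$$
which is exactly the desired right-hand side.

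There is no serious obstacle here; the only subtlety is bookkeeping, namely ensuring that the interchange of the expectation over $G$ with the sums over $y\in\mathbb{F}_p^n$ and $z\in\mathbb{Z}^n$ is legitimate. This is handled cleanly by Tonelli's theorem because every quantity involved is non-negative, and the implicit independence between $G$ and $u_\mathfrak{p}$ is what converts the conditional probability into the unconditional distribution of $Gu_\mathfrak{p}$. Thus the proposition follows directly from Lemma \ref{ineq} by taking expectations and reorganizing.
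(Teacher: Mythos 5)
Your proof is correct and follows essentially the same route as the paper: apply Lemma \ref{ineq} pointwise in $M$, take $\mathbb{E}_G$, swap with the sums (justified by Tonelli/non-negativity), and use the independence of $G$ and $u_\mathfrak{p}$ to rewrite $\sum_M \mathrm{Pr}(G=M)\,\mathrm{Pr}(Mu_\mathfrak{p}=y)$ as $\mathrm{Pr}(Gu_\mathfrak{p}=y)$. The paper compresses all of this into one chain of (in)equalities and leaves the tower-property step implicit, while you spell it out; no substantive difference.
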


\begin{proof} 
See Appendix \hyperref[B]{B}.
\end{proof}
 
The following proposition applies the averaging argument in proposition \ref{av} on a counting function that we define now. For any $n\in \mathbb{Z}_{>0}$ and  $S\subset\mathbb{R}^n,$ define $N_S: \mathcal{P}(\mathbb{R}^n) \rightarrow \mathbb{Z}_{\geq 0} \cup \lbrace \infty \rbrace$ by $N_S(\Lambda)=|\Lambda \cap (S\setminus \lbrace 0 \rbrace)|.$ 

\begin{prop} \label{Av1}
For any quadruple of parameters $\mathfrak{p}=(n,k,p,a),$ random variable $G$ over $\mathbb{F}_p^{n\times k}$  and $S\subset \mathbb{R}^n,$ we have that 
\begin{equation}
 \mathbb{E}_{G}\left[ N_{S}(a\Lambda(G)\setminus ap\mathbb{Z}^n) \right] \leq p^k \cdot \xi^{\max}(G) \cdot  N_{S}(a\mathbb{Z}^n ) \label{EZNS}
\end{equation}
and
\begin{equation}
 \mathbb{E}_{U_\mathfrak{p}}\left[ N_{S}(\Lambda_\mathfrak{p}\setminus ap\mathbb{Z}^n) \right] \leq \frac{p^k(1-\xi_\mathfrak{p})}{(1-p^{k-n})(p^n-1)} \cdot  N_{S}(a\mathbb{Z}^n ) \label{EUNS}
\end{equation}
\end{prop}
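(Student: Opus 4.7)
The plan is to derive \eqref{EZNS} by applying Proposition \ref{av} to a carefully chosen nonnegative test function, and then to obtain \eqref{EUNS} from \eqref{EZNS} by realizing $U_\mathfrak{p}$ as $U_\mathfrak{p}'$ conditioned on $U_\mathfrak{p}' \in M_\mathfrak{p}$ and estimating the conditioning probability by a union bound.

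First I would prove \eqref{EZNS}. Setting $g(\lambda):=1_S(a\lambda)\cdot 1_{\mathbb{R}^n\setminus p\mathbb{Z}^n}(\lambda)$, one has $g\geq 0$ and $\sum_{\lambda\in\Lambda(G)} g(\lambda)=N_S(a\Lambda(G)\setminus ap\mathbb{Z}^n)$, since the factor $1_{\lambda\notin p\mathbb{Z}^n}$ automatically discards the origin. Proposition \ref{av} then upper bounds the left-hand side of \eqref{EZNS} by $p^k\,\mathbb{E}_{Gu_\mathfrak{p}}\bigl[\sum_{z\in\mathbb{Z}^n}g(Gu_\mathfrak{p}+pz)\bigr]$. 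Because $Gu_\mathfrak{p}$ takes values in the natural representatives $\{0,\ldots,p-1\}^n$ of $\mathbb{F}_p^n$, the condition $Gu_\mathfrak{p}+pz\in p\mathbb{Z}^n$ is equivalent to $Gu_\mathfrak{p}=0$, so in the expectation only the terms with $y:=Gu_\mathfrak{p}\neq 0$ survive. Replacing $\mathrm{Pr}(Gu_\mathfrak{p}=y)$ by $\xi^{\max}(G)$ on these terms leaves the residual double sum $\sum_{y\in\mathbb{F}_p^n\setminus\{0\}}\sum_{z\in\mathbb{Z}^n}1_S(a(y+pz))$, which equals $\bigl|(a\mathbb{Z}^n\setminus ap\mathbb{Z}^n)\cap S\bigr|\leq N_S(a\mathbb{Z}^n)$; this yields \eqref{EZNS}.

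For \eqref{EUNS}, observe that both $U_\mathfrak{p}$ and the law of $U_\mathfrak{p}'$ conditioned on $\{U_\mathfrak{p}'\in M_\mathfrak{p}\}$ are uniform on $M_\mathfrak{p}$, hence for every nonnegative functional $X$ of the matrix one has $\mathbb{E}_{U_\mathfrak{p}}[X]\leq \mathbb{E}_{U_\mathfrak{p}'}[X]/\mathrm{Pr}(U_\mathfrak{p}'\in M_\mathfrak{p})$. Applying this with $X=N_S(a\Lambda(\cdot)\setminus ap\mathbb{Z}^n)$ and bounding the numerator via \eqref{EZNS} for $G=U_\mathfrak{p}'$, together with the identity $p^k\xi^{\max}(U_\mathfrak{p}')=p^k(1-\xi_\mathfrak{p})/(p^n-1)$ recalled in the preliminaries, handles the top. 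For the bottom, a union bound on the events ``the $i$-th column of $U_\mathfrak{p}'$ lies in the span of the preceding columns'' gives $\mathrm{Pr}(U_\mathfrak{p}'\notin M_\mathfrak{p})\leq \sum_{i=1}^{k}p^{i-1}/p^n=(p^k-1)/\bigl(p^n(p-1)\bigr)\leq p^{k-n}$ (the last inequality uses $p\geq 2$), so $\mathrm{Pr}(U_\mathfrak{p}'\in M_\mathfrak{p})\geq 1-p^{k-n}$, and \eqref{EUNS} follows.

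The only delicate point is the bookkeeping in the first step: the factor $1_{\lambda\notin p\mathbb{Z}^n}$ in $g$ must be seen to kill exactly the $y=0$ contribution of the expectation in Proposition \ref{av}, so that the worst-case probability $\xi^{\max}(G)$, rather than the full mass function of $Gu_\mathfrak{p}$ (which would force $\xi^{(0)}(G)$ to appear and spoil the bound), is what multiplies $N_S(a\mathbb{Z}^n)$. The conditional-expectation reduction and the union bound in the second step are then routine.
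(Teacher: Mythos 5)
Your argument is correct and, in substance, matches the paper's. For \eqref{EZNS} your test function $g(\lambda)=1_S(a\lambda)\cdot 1_{\mathbb{R}^n\setminus p\mathbb{Z}^n}(\lambda)$ is the same function the paper feeds to Proposition~\ref{av}, written there as $g(\lambda)=N_S(\{a\lambda\}\setminus ap\mathbb{Z}^n)$; the ensuing steps (only $y\neq 0$ survives because the class representative $y=0$ is the unique $y$ with $y+p\mathbb{Z}^n\subset p\mathbb{Z}^n$, then bound $\mathrm{Pr}(Gu_\mathfrak{p}=y)$ by $\xi^{\max}(G)$ and collapse the double sum to $N_S(a\mathbb{Z}^n)$) are identical. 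For \eqref{EUNS} you likewise reproduce the paper's route through Lemma~\ref{U}: condition $U_\mathfrak{p}'$ on $M_\mathfrak{p}$, plug in $\xi^{\max}(U_\mathfrak{p}')=(1-\xi_\mathfrak{p})/(p^n-1)$, and divide by $\mathrm{Pr}(U_\mathfrak{p}'\in M_\mathfrak{p})\geq 1-p^{k-n}$.

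The one place you depart from the paper is in how you establish $\mathrm{Pr}(U_\mathfrak{p}'\in M_\mathfrak{p})\geq 1-p^{k-n}$. You use a column-by-column union bound,
\[
\mathrm{Pr}(U_\mathfrak{p}'\notin M_\mathfrak{p})\leq \sum_{i=1}^{k}\frac{p^{i-1}}{p^n}=\frac{p^k-1}{p^n(p-1)}\leq p^{k-n}\quad(p\geq 2),
\]
which is short and self-contained. The paper instead proves this in Lemma~\ref{RU} from the exact rank distribution of Lemma~\ref{RU1}, phrased via $q$-Pochhammer symbols and an induction on $k$. Your union bound is the more elementary argument and suffices here; the paper's heavier machinery is not wasted, though, because the same $q$-Pochhammer identities also give the upper bounds on $\mathrm{Pr}(\mathrm{rank}(U_\mathfrak{p}')=j)$ for $j<k$ that are needed later in the proof of Lemma~\ref{PH}. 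Either way your proof is valid.
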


\begin{proof}
See Appendix \hyperref[B]{B}.
\end{proof}

Inequality \ref{EUNS} will be used to prove lemma \ref{IP}, thereby giving an upper bound on the probability of error for lattice decoding. Another application of the averaging argument is deriving an upper bound on the flatness factor in proposition \ref{FF1}.

Before turning to the probability of error of lattice decoding, we mention a few properties of the counting function $N_S.$ 

First, for any $S,\Lambda \subset \mathbb{R}^n$ and $a>0,$ it is clear that $N_S(a\Lambda)=N_{\frac{1}{a}S}(\Lambda).$ Moreover, the following two lemmas are useful.

\begin{lemma}[\cite{Or_Erez}] \label{OR}
 For any $S,\Lambda\subset \mathbb{R}^n,$ $q\in \mathbb{R}^n$ and $r>0,$ we have that $N_{\mathcal{B}_n(q,r)}\left( \mathbb{Z}^n\right)\leq \mu_L(\mathcal{B}_n(0,1))(r+\sqrt{n}/2)^n.$
\end{lemma}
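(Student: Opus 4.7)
The plan is a classical volume-packing argument. To each integer point $z \in \mathbb{Z}^n \cap \mathcal{B}_n(q,r)$, I would associate the unit cube $Q_z := z + [-1/2,1/2]^n$ centered at $z$. Since integer translates of $[-1/2,1/2]^n$ tile $\mathbb{R}^n$, the cubes $\lbrace Q_z \rbrace_{z\in \mathbb{Z}^n}$ are pairwise disjoint up to sets of $\mu_L$-measure zero, and each has unit volume.

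The key geometric observation is that $[-1/2,1/2]^n$ has diameter $\sqrt{n}$, so every point of $Q_z$ lies within distance $\sqrt{n}/2$ of $z$. By the triangle inequality, $z \in \mathcal{B}_n(q,r)$ implies $Q_z \subset \mathcal{B}_n(q, r + \sqrt{n}/2)$. I would then sum unit volumes over such $z$ and invoke countable additivity and monotonicity of $\mu_L$:
$$
N_{\mathcal{B}_n(q,r)}(\mathbb{Z}^n) \;=\; \sum_{z \in \mathbb{Z}^n \cap \mathcal{B}_n(q,r)} \mu_L(Q_z) \;\leq\; \mu_L\!\left( \mathcal{B}_n(q, r + \sqrt{n}/2) \right).
$$

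To finish, I would use translation invariance of $\mu_L$ and the scaling identity $\mu_L(\mathcal{B}_n(0,R)) = \mu_L(\mathcal{B}_n(0,1))\,R^n$, which turns the right-hand side into $\mu_L(\mathcal{B}_n(0,1))(r + \sqrt{n}/2)^n$. Removing the point $0$ from $S$ in the definition of $N_S$ can only decrease the count, so the bound remains valid as stated. No serious obstacle is anticipated: the whole proof is just the standard packing trick of upgrading a lattice-point count into a volume comparison, with $\sqrt{n}/2$ absorbing the circumradius of the fundamental cube.
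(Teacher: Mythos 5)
Your proof is correct and is the standard volume-packing argument: associate to each counted lattice point its closed unit Voronoi cube, observe the cubes are essentially disjoint and fit inside the ball of radius $r+\sqrt{n}/2$ (half the cube's diameter), and compare volumes. The paper simply cites \cite{Or_Erez} for this lemma rather than reproving it, and this is precisely the argument used there; the only cosmetic point is that your displayed equality should really be an inequality (or index over $\mathbb{Z}^n\cap(\mathcal{B}_n(q,r)\setminus\{0\})$), which you correctly address in your closing remark about removing the origin.
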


\begin{lemma} \label{NS}
    For any quadruple of parameters $\mathfrak{p}=(n,k,p,a)$ and  $r>0,$ we have that 
    $$
    \left\lbrace q\in \mathbb{R}^n \; ; \; N_{\mathcal{B}(q,r)}(ap\mathbb{Z}^n)\geq 1\right\rbrace \subset \left\lbrace q\in \mathbb{R}^n \; ; \; \|q\|\geq ap-r \right\rbrace.
 $$
 Also, with $D=\lbrace 0 \rbrace \cup [1,\infty],$ any $D-$valued random variable $L$ satisfies $\mathrm{Pr}( L\geq 1)\leq \mathbb{E}[L].$ 
   \end{lemma}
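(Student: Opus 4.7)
The lemma packages two elementary claims that I expect to use later in bounding the probability of lattice decoding error via the average $\mathbb{E}_{U_\mathfrak{p}}[N_S(\Lambda_\mathfrak{p}\setminus ap\mathbb{Z}^n)]$ supplied by Proposition \ref{Av1}. I would prove them independently, each by a direct one-line argument.

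For the set inclusion, my plan is simply to unpack the definition of $N_S$. The hypothesis $N_{\mathcal{B}(q,r)}(ap\mathbb{Z}^n)\geq 1$ is equivalent to the existence of some nonzero $\lambda\in ap\mathbb{Z}^n$ with $\lambda\in\mathcal{B}(q,r)$, i.e.\ $\|q-\lambda\|<r$. The essential observation is that every nonzero vector of $ap\mathbb{Z}^n$ has norm at least $ap$, attained by the scaled standard basis vectors $\pm ap\,e_i$. Combining this with the reverse triangle inequality
\[
\|q\|\ \geq\ \|\lambda\|-\|q-\lambda\|\ >\ ap-r
\]
yields the desired inclusion. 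The degenerate regime $r\geq ap$ is handled automatically because then $ap-r\leq 0$ and the right-hand set is all of $\mathbb{R}^n$.

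For the probability inequality, the plan is to invoke Markov's inequality, exploiting the value structure of $L$. Because $L\in\{0\}\cup[1,\infty]$, on the event $\{L\geq 1\}$ one has $L\geq 1$, while on its complement $L=0$; hence pointwise $L\geq \mathbf{1}_{\{L\geq 1\}}$. Taking expectations gives $\mathbb{E}[L]\geq \mathrm{Pr}(L\geq 1)$.

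I do not foresee any significant obstacle: both assertions are immediate once the definitions are unpacked. The only minor care needed is noting the trivial regime $r\geq ap$ in the first part and verifying that the two-valued structure $\{0\}\cup[1,\infty]$ of $L$ is precisely what allows dominating an indicator by the identity in the second part.
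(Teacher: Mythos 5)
Your proof is correct and takes essentially the same route as the paper: the set inclusion via the reverse triangle inequality together with the fact that nonzero points of $ap\mathbb{Z}^n$ have norm at least $ap$, and the probability bound via Markov's inequality. One small remark: the pointwise domination $L\geq 1_{\{L\geq 1\}}$ (and hence Markov) holds for any nonnegative random variable, so the $D$-valued hypothesis is not actually needed for that step.
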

   \begin{proof}
   See Appendix \hyperref[B]{B}.
   \end{proof}

\subsection{The Probability of Error}

The tools we develop in this section will be used in theorem \ref{PE} to get an upper bound on the probability of error of lattice decoding involving the Poltyrev exponent.   

Throughout the paper, we fix a sequence $\lbrace \sigma_{w,n} \rbrace \subset \mathbb{R}_{>0},$ and for each $n,$ we let $W^{(n)}$ denote a random vector whose components are i.i.d. zero-mean Gaussian random variables of variance $\sigma_{w,n}^2.$  We also denote the probability density function of a random variable $Z$ by $f_Z.$

A useful result used to derive error exponents is the following version of the Chernoff bound.

\begin{lemma}[Chernoff Bound, proposition 13.1.3 in \cite{Zamir}] \label{CB}
 For any $r>0,$ if  $E_{\mathrm{sp}}(x) := 1_{[1,\infty)}(x)\cdot (x-1-\ln x)/2,$ then
$$
\mathrm{Pr}(\|W^{(n)}\|>r ) \leq \exp\left(-nE_{\mathrm{sp}}\left( \frac{r^2}{n\sigma_{w,n}^2} \right)\right).
$$
\end{lemma}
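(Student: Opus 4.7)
The plan is to apply the standard Chernoff (exponential Markov) bound on $\|W^{(n)}\|^2$ and then optimize the free parameter. Since $W^{(n)}$ has i.i.d.\ $N(0,\sigma_{w,n}^2)$ coordinates, $\|W^{(n)}\|^2$ is a sum of $n$ i.i.d.\ squared Gaussians, so its moment generating function factorizes as a product of the single-coordinate MGF, which is $\mathbb{E}[e^{sW_1^2}]=(1-2s\sigma_{w,n}^2)^{-1/2}$ for $0<s<1/(2\sigma_{w,n}^2)$.

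First I would note the trivial case: if $x:=r^2/(n\sigma_{w,n}^2)<1$, then $E_{\mathrm{sp}}(x)=0$ by definition, the claimed bound is just $\mathrm{Pr}(\|W^{(n)}\|>r)\le 1$, and we are done. So assume $x\ge 1$. Then for any $s\in(0,1/(2\sigma_{w,n}^2))$,
\begin{equation*}
\mathrm{Pr}(\|W^{(n)}\|>r)=\mathrm{Pr}(\|W^{(n)}\|^2>r^2)\le e^{-sr^2}\,\mathbb{E}[e^{s\|W^{(n)}\|^2}]=e^{-sr^2}(1-2s\sigma_{w,n}^2)^{-n/2}.
\end{equation*}

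Next I would optimize the exponent $\phi(s):=-sr^2-(n/2)\log(1-2s\sigma_{w,n}^2)$. Setting $\phi'(s)=0$ yields $1-2s\sigma_{w,n}^2=n\sigma_{w,n}^2/r^2=1/x$, so the optimal $s^\star=(x-1)/(2x\sigma_{w,n}^2)$, which is indeed in the admissible range since $x\ge 1$. Substituting back gives $-s^\star r^2=-n(x-1)/2$ and $-(n/2)\log(1-2s^\star\sigma_{w,n}^2)=(n/2)\log x$, so the optimized exponent equals $-(n/2)(x-1-\log x)=-nE_{\mathrm{sp}}(x)$, exactly as claimed.

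There is no real obstacle here; the only subtlety is to treat the case $x<1$ separately, because the usual Chernoff optimization would give $s^\star\le 0$, which is not admissible. Handling it by the trivial bound $\mathrm{Pr}(\cdot)\le 1$ matches the definition $E_{\mathrm{sp}}(x)=1_{[1,\infty)}(x)\cdot(x-1-\log x)/2$ and closes the argument.
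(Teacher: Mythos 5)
The paper does not prove this lemma; it simply cites it as Proposition 13.1.3 of Zamir's textbook. Your derivation is the standard Chernoff argument for the chi-squared tail (bound $\mathrm{Pr}(\|W^{(n)}\|^2>r^2)$ by $e^{-sr^2}\mathbb{E}[e^{s\|W^{(n)}\|^2}]$, optimize $s$, check the admissible range), and it is correct in all details, including the necessary separate treatment of the case $r^2<n\sigma_{w,n}^2$ where the indicator in the definition of $E_{\mathrm{sp}}$ makes the bound trivial.
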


Recall that, for any lattice $\Lambda$ in $\mathbb{R}^n,$ the probability of error for lattice decoding in the presence of noise $W^{(n)}$ is given by $\mathrm{Pr}(W^{(n)}\not\in \mathcal{V}(\Lambda))=\mathrm{Pr}(N_{\mathcal{B}_n(W^{(n)},\|W^{(n)}\|)}(\Lambda)\geq 1).$ 

Define, for any quadruple of parameters $\mathfrak{p}=(n,k,p,a)$ and any random variable $G$ over $\mathbb{F}_p^{n\times k},$   $h(\mathfrak{p},G,\rho):= \mathbb{E}_{W^{(n)}}\left[ \mathbb{E}_{G}\left[ N_{\mathcal{B}_n(W^{(n)},\rho)}(a\Lambda(G)\setminus ap\mathbb{Z}^n)\right] \; \left| \; \|W^{(n)}\|=\rho\right]\right.$ (see part \hyperref[AI]{3} of Appendix A),
\begin{align*}
 I_\mathfrak{p}\left(G,W^{(n)} \right) := \int_0^\infty f_{\|W^{(n)}\|}(\rho) \cdot \min\left( h(\mathfrak{p},G,\rho),1 \right) \, d\rho,
 \end{align*}
and
\begin{equation}\label{ANN}
A_{\mathfrak{p}}^{\mathrm{NN}}(G,W^{(n)}) := \mathrm{Pr}( \|W^{(n)}\|>ap/2 ) + I_\mathfrak{p}\left(G,W^{(n)}\right)
\end{equation}
Denote $B_\mathfrak{p}^{\mathrm{NN}}(W^{(n)})=A_{\mathfrak{p}}^{\mathrm{NN}}(U_{\mathfrak{p}},W^{(n)}).$ Since $\mathbb{E}_G$ is a finite linear combination, and since the counting function $N_S$ is always nonnegative, one may exchange the order of expectations in the definition of $h(\mathfrak{p},G,\rho),$ i.e., we may rewrite $h(\mathfrak{p},G,\rho)=\mathbb{E}_{G}\left[ \mathbb{E}_{W^{(n)}}\left[ N_{\mathcal{B}(W^{(n)},\rho)}(a\Lambda(G)\setminus ap\mathbb{Z}^n) \; \left| \; \|W^{(n)}\|=\rho\right]\right.\right].$

\begin{prop} \label{PE1} For any quadruple of parameters $\mathfrak{p}=(n,k,p,a)$ and any random variable $G$ over $\mathbb{F}_p^{n\times k},$ we have that
     $$
      \mathbb{E}_{Z}\left[ \mathrm{Pr}\lbrace W^{(n)}\not\in \mathcal{V}(a\Lambda(G)) \rbrace \right] \leq A_{\mathfrak{p}}^{\mathrm{NN}}(G,W^{(n)}).
     $$
\end{prop}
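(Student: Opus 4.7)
The plan is to decompose the error event according to whether the noise norm $\|W^{(n)}\|$ exceeds $ap/2.$ The first piece is controlled trivially by $\mathrm{Pr}(\|W^{(n)}\|>ap/2),$ which already appears in the definition of $A_\mathfrak{p}^{\mathrm{NN}}(G,W^{(n)}).$ On the complementary event $\{\|W^{(n)}\|\leq ap/2\},$ I would argue that the decoding ball $\mathcal{B}_n(W^{(n)},\|W^{(n)}\|)$ contains no nonzero point of $ap\mathbb{Z}^n$: by the first part of Lemma \ref{NS} applied with $q=W^{(n)}$ and $r=\|W^{(n)}\|,$ a nonzero point of $ap\mathbb{Z}^n$ in that ball would force $\|W^{(n)}\|\geq ap-\|W^{(n)}\|,$ contradicting $\|W^{(n)}\|<ap/2.$ Hence on this event
$$
N_{\mathcal{B}_n(W^{(n)},\|W^{(n)}\|)}(a\Lambda(G)) = N_{\mathcal{B}_n(W^{(n)},\|W^{(n)}\|)}(a\Lambda(G)\setminus ap\mathbb{Z}^n).
$$

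Next I would rewrite $\mathrm{Pr}(W^{(n)}\not\in \mathcal{V}(a\Lambda(G)))$ as $\mathrm{Pr}(N_{\mathcal{B}_n(W^{(n)},\|W^{(n)}\|)}(a\Lambda(G))\geq 1),$ take the expectation over $G$ and interchange with the probability (this is legitimate since $G$ has finite support and the integrands are nonnegative, so Tonelli applies). Conditioning on $\|W^{(n)}\|=\rho$ and then integrating against $f_{\|W^{(n)}\|}(\rho)$ reduces the remaining contribution to
$$
\int_0^{ap/2} f_{\|W^{(n)}\|}(\rho)\, \mathrm{Pr}\!\left(N_{\mathcal{B}_n(W^{(n)},\rho)}(a\Lambda(G)\setminus ap\mathbb{Z}^n)\geq 1\,\middle|\, \|W^{(n)}\|=\rho\right) d\rho.
$$

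To bound the conditional probability, I would invoke the second part of Lemma \ref{NS}: the random variable $N_{\mathcal{B}_n(W^{(n)},\rho)}(a\Lambda(G)\setminus ap\mathbb{Z}^n)$ is integer-valued, hence takes values in $\{0\}\cup[1,\infty],$ so its tail at level $1$ is upper-bounded by its expectation. That expectation is precisely $h(\mathfrak{p},G,\rho)$ in the form given by the remark after \eqref{ANN} (with $G$-expectation outside). Since a probability is also trivially bounded by $1,$ we may replace $h(\mathfrak{p},G,\rho)$ by $\min(h(\mathfrak{p},G,\rho),1),$ and extending the domain of integration from $[0,ap/2]$ to $[0,\infty)$ only enlarges the bound. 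This yields the $I_\mathfrak{p}(G,W^{(n)})$ term, and adding back the tail $\mathrm{Pr}(\|W^{(n)}\|>ap/2)$ reproduces exactly $A_\mathfrak{p}^{\mathrm{NN}}(G,W^{(n)}).$

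The argument is essentially an assembly of Lemma \ref{NS} with a union bound over the two noise regimes, so no single step should be a serious obstacle; the only point requiring care is the justification for swapping $\mathbb{E}_G$ with the probability and with the conditional expectation over $W^{(n)}$ in the definition of $h(\mathfrak{p},G,\rho),$ which relies on the finiteness of $\mathbb{F}_p^{n\times k}$ and nonnegativity of the counting function. The subtractive step removing $ap\mathbb{Z}^n$ is the conceptual crux, because only then is the averaging inequality of Proposition \ref{Av1} (and ultimately the $h(\mathfrak{p},G,\rho)$ bound) nontrivial.
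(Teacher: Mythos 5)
Your proof is correct and is, in substance, the same as the paper's: the contribution from $ap\mathbb{Z}^n$ is isolated via the first part of Lemma~\ref{NS} (which shows this part can contribute only when $\|W^{(n)}\|\geq ap/2$), and the remaining term is bounded using Markov's inequality (the second part of Lemma~\ref{NS}) together with the trivial cap by $1$ to yield $I_\mathfrak{p}(G,W^{(n)})$. The only difference is cosmetic --- you split the error event on $\{\|W^{(n)}\|\leq ap/2\}$ versus its complement and observe that on the small-noise event the open decoding ball misses $ap\mathbb{Z}^n\setminus\{0\}$, whereas the paper splits the lattice $a\Lambda(G)$ into $ap\mathbb{Z}^n$ and $a\Lambda(G)\setminus ap\mathbb{Z}^n$ and uses sub-additivity of the conditional tail probability $g(\cdot,r)$ over that disjoint union (stated there as an equality, though only the inequality $g(S\cup T,r)\le g(S,r)+g(T,r)$ actually holds and is all that is needed), so your reorganization even sidesteps that small misstatement.
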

     
\begin{proof}  For any subset $S\subset \mathbb{R}^n$ and any $r>0,$ denote $g(S,r)=\mathrm{Pr}(N_{\mathcal{B}(W^{(n)},r)}(S)\geq 1 \; | \; \|W^{(n)}\|=r ).$ Then, 
\begin{equation*} 
\mathrm{Pr}\lbrace W^{(n)}\not\in \mathcal{V}(a\Lambda(G)) \rbrace = \int_0^\infty f_{\|W^{(n)}\|}(r) \cdot g(a\Lambda(G),r)\, dr
\end{equation*}
Since $g(S\cup T,r)=g(S,r)+g(T,r)$ whenever $S$ and $T$ are disjoint sets, we see that
     \begin{align}
     \mathrm{Pr}\lbrace W^{(n)}\not\in &\mathcal{V}(a\Lambda(G)) \rbrace = \int_0^\infty f_{\|W^{(n)}\|}(r) \cdot g(ap\mathbb{Z}^n,r)\, dr \nonumber \\
     &+ \int_0^\infty f_{\|W^{(n)}\|}(r) \cdot g(a\Lambda(G)\setminus ap\mathbb{Z}^n,r) \, dr \label{ER}
     \end{align}
     We will upper bound  the first integral in equation \ref{ER} by $\mathrm{Pr}\lbrace \|W^{(n)}\|>ap/2 \rbrace,$ and the expectation, with respect to $G,$ of the second integral in \ref{ER} by $I_\mathfrak{p}(G,W^{(n)}).$ 
     
     The first part of lemma \ref{NS} yields the estimate $g(ap\mathbb{Z}^n,r) \leq \mathrm{Pr}(\|W^{(n)}\|\geq ap-r \; | \; \|W^{(n)}\|=r),$ so $g(ap\mathbb{Z}^n,r)\leq 1_{[ap/2,\infty)}(r).$ Hence,
     \begin{align*}
     \int_0^\infty f_{\|W^{(n)}\|}(r)   \cdot g(ap\mathbb{Z}^n,r) \, dr \leq \mathrm{Pr}\left(\|W^{(n)}\|\geq ap/2 \right) 
     \end{align*}
    The second part of lemma \ref{NS} yields $g(a\Lambda(G)\setminus ap\mathbb{Z}^n,r)\leq \mathbb{E}_{W^{(n)}}\left[ N_{\mathcal{B}(W^{(n)},r)}(a\Lambda(G)\setminus ap\mathbb{Z}^n) \; \left| \; \|W^{(n)}\|=r\right]\right.,$ so $\mathbb{E}_G\left[ g(a\Lambda(G)\setminus ap\mathbb{Z}^n,r) \right] \leq h(\mathfrak{p},G,r).$ Then, 
     \begin{align*}
    &\mathbb{E}_{G}\left[\int_0^\infty f_{\|W^{(n)}\|}(r) \cdot g(a\Lambda(G)\setminus ap\mathbb{Z}^n,r) \, dr\right] \nonumber \\
    &= \int_0^\infty f_{\|W^{(n)}\|}(r) \cdot  \mathbb{E}_G\left[ g(a\Lambda(G)\setminus ap\mathbb{Z}^n,r)\right] \, dr \nonumber \\ 
     &\leq \int_0^\infty f_{\|W^{(n)}\|}(r) \cdot \min\left( h(\mathfrak{p},G,r),1 \right) \, dr   = I_\mathfrak{p}(G,W^{(n)}),
     \end{align*}
     as desired.
   \end{proof}  

Recall that the unexpurgated Poltyrev exponent is given by
   $$
   E_P^{\mathrm{un}}(b) = \left\lbrace \begin{array}{ll}
   E_{\mathrm{sp}}(b) &, \text{ if } 1\leq  b <2 \\
   \frac{1}{2}\log \frac{eb}{4} &, \text{ if } 2\leq b.
   \end{array} \right.,
   $$
   and the Volume-to-Noise Ratio (VNR) of a lattice $\Lambda \subset \mathbb{R}^n$ is defined by $\gamma_{\Lambda}(\sigma)= \mu_L(\mathcal{V}(\Lambda))^{2/n}/\sigma^2.$ 
   
By bounding $\mathrm{Pr}( \|W^{(n)}\|>ap/2 )$ using the Chernoff bound, and $I_\mathfrak{p}\left(U_{\mathfrak{p}},W^{(n)}\right)$ as in the following lemma, one might be able to make $B_{\mathfrak{p}}^{\mathrm{NN}}(W^{(n)})$ vanish as $\exp\left(-nE_P^{\mathrm{un}}\left( \frac{\gamma_{\Lambda_\mathfrak{p}}(\sigma_{w,n})}{2\pi e } \right)\right).$ Theorems \ref{PE} and \ref{CM} discuss this. 

    \begin{lemma} \label{IP}
   Let $\mathfrak{p}=(n,k,p,a)$ be a quadruple of parameters such that $\varepsilon:= V_{\mathfrak{p}}^{2/n}/(2\pi e \sigma_{w,n}^2)-1>0.$ Then,
  \begin{align*}
    &I_{\mathfrak{p}}\left(W^{(n)},\Lambda_{\mathfrak{p}}\right)\leq \left(\frac{\sqrt{\pi e/2}}{p^{1-k/n}} \right)^n \\
    &+n\left( \left( \frac{1}{\sqrt{1+\varepsilon}}+ \frac{\sqrt{\pi e/2}}{p^{1-k/n}} \right)^{n} +e^{-n \cdot \inf_{u\in C}v_{\mathfrak{p}}^{\mathrm{NN}}(u,\varepsilon)}\right),
    \end{align*}
   where 
   \begin{align*}
  v_{\mathfrak{p}}^{\mathrm{NN}}(u,\varepsilon) &:=   E\left( \frac{u^2}{n\sigma_{w,n}^2} \right) \\
  &\;\; - \frac{n-1}{n}\log \left( \frac{u}{\sqrt{n\sigma_{w,n}^2(1+\varepsilon)}} + \frac{\sqrt{\pi e/2}}{p^{1-k/n}} \right)
 \end{align*}
 and $C:=\left[ \sqrt{n\sigma_{w,n}^2}, \sqrt{n\sigma_{w,n}^2(1+\varepsilon)}\right].$ 
   \end{lemma}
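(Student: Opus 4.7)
My plan is to first use the averaging inequality \eqref{EUNS} from Proposition~\ref{Av1} to obtain a pointwise upper bound on $h(\mathfrak{p},U_\mathfrak{p},\rho)$ and then split the integral defining $I_\mathfrak{p}$ into three $\rho$-intervals tailored to the three summands in the statement. Applying \eqref{EUNS} with $S=\mathcal{B}_n(W^{(n)},\rho)$, rescaling via $N_{\mathcal{B}_n(q,r)}(a\mathbb{Z}^n)=N_{\mathcal{B}_n(q/a,r/a)}(\mathbb{Z}^n)$, and invoking Lemma~\ref{OR} give $h(\mathfrak{p},U_\mathfrak{p},\rho)\leq\frac{p^k(1-\xi_\mathfrak{p})}{(1-p^{k-n})(p^n-1)}\cdot\mu_L(\mathcal{B}_n(0,1))(\rho/a+\sqrt{n}/2)^n$. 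The Stirling estimate $\mu_L(\mathcal{B}_n(0,1))^{1/n}\leq\sqrt{2\pi e/n}$, the identity $ap^{1-k/n}=V_\mathfrak{p}^{1/n}=\sqrt{2\pi e\sigma_{w,n}^2(1+\varepsilon)}$ (which follows from the definition of $\varepsilon$), together with a simple bound on the lattice-counting prefactor by (a constant multiple of) $p^{k-n}$, collapse the right-hand side to $(\rho/A+B)^n$ with $A:=\sqrt{n\sigma_{w,n}^2(1+\varepsilon)}$ and $B:=\sqrt{\pi e/2}/p^{1-k/n}$.

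I then split the integral at $r_1:=\sqrt{n\sigma_{w,n}^2}$ and $r_2:=A$ so that $C=[r_1,r_2]$. On $[0,r_1]$ I take $\min(h,1)\leq h\leq(\rho/A+B)^n$ and apply the weighted AM-GM inequality $(x+y)^n\leq y^n+nx(x+y)^{n-1}$ (valid for $x,y\geq 0$, since $f(x):=y^n+nx(x+y)^{n-1}-(x+y)^n$ satisfies $f(0)=0$ and $f'(x)=n(n-1)x(x+y)^{n-2}\geq 0$). With $x=\rho/A\leq 1/\sqrt{1+\varepsilon}$ and $y=B$, this gives $h\leq B^n+n(1/\sqrt{1+\varepsilon}+B)^n$ on $[0,r_1]$, and integrating $f_{\|W^{(n)}\|}$ against this bound produces the first two summands of the claim. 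On $[r_2,\infty)$ I use $\min(h,1)\leq 1$ and Lemma~\ref{CB} to get $\mathrm{Pr}(\|W^{(n)}\|>r_2)\leq\exp(-nE_{\mathrm{sp}}(1+\varepsilon))$, which, since $v_\mathfrak{p}^{\mathrm{NN}}(r_2,\varepsilon)=E_{\mathrm{sp}}(1+\varepsilon)-(1-1/n)\log(1+B)\leq E_{\mathrm{sp}}(1+\varepsilon)$, is dominated by $e^{-n\inf_{u\in C}v_\mathfrak{p}^{\mathrm{NN}}(u,\varepsilon)}$. On $C$ I use $\min(h,1)\leq h^{(n-1)/n}\leq(\rho/A+B)^{n-1}$ together with a Stirling envelope on the chi density, $f_{\|W^{(n)}\|}(\rho)\leq\frac{1}{\sigma_{w,n}\sqrt{\pi t}}\exp(-nE_{\mathrm{sp}}(t))$ with $t=\rho^2/(n\sigma_{w,n}^2)\geq 1$; the resulting exponent recombines exactly as $-nv_\mathfrak{p}^{\mathrm{NN}}(\rho,\varepsilon)$, and a change of variables $\rho\mapsto t$ shows that the $C$-integral is at most $\frac{\sqrt{n}\log(1+\varepsilon)}{2\sqrt{\pi}}e^{-n\inf_{u\in C}v_\mathfrak{p}^{\mathrm{NN}}(u,\varepsilon)}$, which is coarsely dominated by $n\,e^{-n\inf_{u\in C}v_\mathfrak{p}^{\mathrm{NN}}(u,\varepsilon)}$. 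Summing the three contributions yields the claim.

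The main obstacle I anticipate is arithmetic bookkeeping: three separate Stirling-type estimates, on $\mu_L(\mathcal{B}_n(0,1))$, on the chi density $f_{\|W^{(n)}\|}$ (essentially from $\Gamma(n/2)\geq\sqrt{4\pi/n}(n/(2e))^{n/2}$), and on the lattice-counting prefactor from \eqref{EUNS}, must be tuned so that after they are combined the exponent collapses precisely to $-nv_\mathfrak{p}^{\mathrm{NN}}$ and the base to $\rho/A+B$ with the exact constant $\sqrt{\pi e/2}/p^{1-k/n}$ in $B$. Once this alignment is in place, the three-way split of the integration range is essentially forced by the three summands in the claim, and the remaining steps (AM-GM on $[0,r_1]$, the chi envelope combined with $h^{(n-1)/n}$ on $C$, and the Chernoff tail on $[r_2,\infty)$) fit together routinely.
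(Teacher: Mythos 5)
Your argument follows the paper's opening reduction exactly: apply inequality~\eqref{EUNS} together with \eqref{2}, Lemma~\ref{OR}, and Theorem~\ref{VUB}, use $ap^{1-k/n}=V_\mathfrak{p}^{1/n}=\sqrt{2\pi e\sigma_{w,n}^2(1+\varepsilon)}$, and drop the harmless prefactor $\tfrac{2}{\sqrt{\pi n}}\le 1$, arriving at $h(\mathfrak{p},U_\mathfrak{p},\rho)\le(\rho/A+B)^n$ with $A=\sqrt{n\sigma_{w,n}^2(1+\varepsilon)}$, $B=\sqrt{\pi e/2}/p^{1-k/n}$. The tail piece $[r_2,\infty)$ and the comparison $e^{-nE_{\mathrm{sp}}(1+\varepsilon)}\le e^{-n\inf_C v_\mathfrak{p}^{\mathrm{NN}}}$ are also identical to the paper. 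The genuine divergence is in the middle of the argument. The paper does not split the $\rho$-range into three pieces and bound $\min(h,1)$ piecewise; instead it integrates $\int_0^{A}f_{\|W^{(n)}\|}(r)(r/A+B)^n\,dr$ by parts, which produces the $B^n$ boundary term and converts the integrand to $\Pr(\|W^{(n)}\|>r)(r/A+B)^{n-1}$ with a coefficient of exactly $n/A$. Your inequality $(x+y)^n\le y^n+nx(x+y)^{n-1}$ is a pointwise surrogate for that boundary term and handles the $[0,r_1]$ piece correctly; and your replacement of the Chernoff tail bound on the CDF by a Stirling envelope on the chi density is a genuinely different (and clean) way to expose the exponent $-nv_\mathfrak{p}^{\mathrm{NN}}$.

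However, the density-envelope route leaves a gap on $C$. After the change of variables you correctly obtain $\tfrac{\sqrt{n}\log(1+\varepsilon)}{2\sqrt{\pi}}e^{-n\inf_{C}v_\mathfrak{p}^{\mathrm{NN}}}$, but the ``coarse domination'' $\tfrac{\sqrt{n}\log(1+\varepsilon)}{2\sqrt{\pi}}\le n$ is not true uniformly in $\varepsilon$: it requires $\log(1+\varepsilon)\le 2\sqrt{\pi n}$, whereas the lemma allows any $\varepsilon>0$ (for $n=2$ your bound already fails once $\varepsilon\gtrsim 149$, and accounting for the $+1$ from the tail piece pushes the threshold even lower). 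The same overshoot appears if you instead bound $1/\sqrt{\pi t}\le 1/\sqrt{\pi}$ and multiply by $|C|$. The paper's integration by parts is exactly what removes this $\varepsilon$-dependence: the coefficient $n/A$ that emerges from differentiating $(r/A+B)^n$ is then multiplied by $J_{r_1,A}\le A\,e^{-n\inf_C v_\mathfrak{p}^{\mathrm{NN}}}$, and the $A$'s cancel to give exactly $n\,e^{-n\inf_C v_\mathfrak{p}^{\mathrm{NN}}}$. To close your gap you would want to replace the density $f_{\|W^{(n)}\|}$ by $-\tfrac{d}{dr}\Pr(\|W^{(n)}\|>r)$ and integrate by parts on the middle range, which brings you back to the paper's bound. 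This is a local fix rather than a fatal flaw, but as written your proof does not establish the lemma for all admissible $\varepsilon$.
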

   \begin{proof}
   See Appendix \hyperref[C]{C}.
   \end{proof}

    A relation between $ \inf_{u}v_{\mathfrak{p}}^{\mathrm{NN}}(u,\varepsilon)$ and Poltyrev's unexpurgated error exponent is given in the following lemma.

    \begin{lemma} \label{VNN}
   Let $\lbrace \mathfrak{p}_n=(n,k_n,p_n,a_n) \rbrace_{n\in \mathbb{Z}_{>1}}$ be a sequence of quadruples of parameters, and $b>0.$
   Assume that $\lim_{n\rightarrow \infty} p_n^{1-k_n/n}=\infty.$ Then, as $n\longrightarrow \infty,$  $\inf_u v_{\mathfrak{p}_n}^{\mathrm{NN}}(u,b_n)=E_{P}^{\mathrm{un}}(1+b)+o(1).$
    \end{lemma}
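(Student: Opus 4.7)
The plan is to reduce the minimization to an optimization over a fixed compact interval independent of $n$ and then pass to the limit by uniform convergence. Substituting $t := u^2/(n\sigma_{w,n}^2)$ maps the set $C$ to $[1,1+b]$ and (treating $b_n \equiv b$; otherwise $b_n \to b$ introduces only $o(1)$ corrections by the same argument) converts $v_{\mathfrak{p}_n}^{\mathrm{NN}}(u,b)$ into
\[
V_n(t) := E_{\mathrm{sp}}(t) - \tfrac{n-1}{n}\log\Bigl(\sqrt{t/(1+b)} + \phi_n\Bigr),
\]
where $\phi_n := \sqrt{\pi e/2}\cdot p_n^{-(1-k_n/n)}$ satisfies $\phi_n \to 0$ by the hypothesis $p_n^{1-k_n/n}\to\infty$. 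Hence $\inf_{u\in C} v_{\mathfrak{p}_n}^{\mathrm{NN}}(u,b) = \inf_{t\in[1,1+b]} V_n(t)$.

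Next I would show that $V_n$ converges uniformly on $[1,1+b]$ to the continuous candidate limit
\[
V^\infty(t) := E_{\mathrm{sp}}(t) - \tfrac{1}{2}\log\bigl(t/(1+b)\bigr) = \tfrac{t-1}{2} - \log t + \tfrac{1}{2}\log(1+b).
\]
Writing
\[
V_n(t) - V^\infty(t) = -\tfrac{n-1}{n}\log\Bigl(1 + \phi_n\sqrt{(1+b)/t}\Bigr) + \tfrac{1}{n}\log\sqrt{t/(1+b)},
\]
both summands vanish uniformly on $[1,1+b]$: the first because $\sqrt{(1+b)/t}\leq\sqrt{1+b}$ and $\phi_n\to 0$, the second because $|\log\sqrt{t/(1+b)}|$ is bounded by a constant depending only on $b$ and the prefactor is $1/n$. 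Uniform convergence of continuous functions on the compact set $[1,1+b]$ then yields $\inf V_n \to \inf V^\infty$.

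It remains to evaluate $\inf_{t\in[1,1+b]} V^\infty(t)$ explicitly. Since $(V^\infty)'(t) = 1/2 - 1/t$ and $V^\infty$ is convex, it has a unique critical point at $t=2$. If $b<1$ (so $1+b<2$), $V^\infty$ is strictly decreasing on $[1,1+b]$, and the minimum is $V^\infty(1+b) = (b - \log(1+b))/2 = E_{\mathrm{sp}}(1+b)$. If $b\geq 1$, then $t=2\in[1,1+b]$ and the minimum is $V^\infty(2) = \tfrac{1}{2}\log\bigl(e(1+b)/4\bigr)$. In either regime the value equals $E_P^{\mathrm{un}}(1+b)$, completing the lemma. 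I expect the only mildly delicate step to be the uniform control of the logarithmic difference, which rests on $\sqrt{t/(1+b)}$ being bounded away from $0$ on the fixed interval $[1,1+b]$; the case split for Poltyrev's formula is otherwise routine.
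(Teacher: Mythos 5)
Your proof is correct, and it takes a genuinely different and in fact cleaner route than the paper's. The paper proves Lemma~\ref{VNN} by explicit calculus on the minimizer: it computes the derivative $f_{n,b}$ of $v_{\mathfrak{p}_n}^{\mathrm{NN}}(\cdot,b)$, evaluates it at the right endpoint of $C_n$ and at the candidate interior point $\sqrt{2n\sigma_{w,n}^2}$, splits into the cases $b<1$ and $b\geq 1$, and in the second case introduces an auxiliary sequence $\alpha_n\to 1$ to squeeze the normalized minimizer $\delta_n = u_n/\sqrt{2n\sigma_{w,n}^2}$ to $1$ before passing to the limit. You instead rescale via $t=u^2/(n\sigma_{w,n}^2)$ to a fixed compact interval $[1,1+b]$, show the rescaled objective $V_n$ converges uniformly there to an explicit limit $V^\infty$ (which is elementary since $\sqrt{t/(1+b)}$ is bounded away from $0$ and $\phi_n\to 0$), and then use the standard fact $|\inf_K V_n - \inf_K V^\infty|\leq \sup_K|V_n - V^\infty|$. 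This bypasses locating or estimating the minimizer entirely; the case split for the Poltyrev exponent only enters at the very end, when minimizing the fixed limit function $V^\infty$ over $[1,1+b]$, which is a one-line calculus exercise. What the paper's approach buys in exchange for the extra work is asymptotic information about \emph{where} the minimum is attained (it pins down $u_n\sim\sqrt{2n\sigma_{w,n}^2}$ when $b\geq 1$), which is not needed for the stated conclusion; your approach extracts exactly the infimum asymptotics and nothing more. Both arguments rely on the same two inputs, namely $p_n^{1-k_n/n}\to\infty$ and the explicit form of $E_{\mathrm{sp}}$ and $E_P^{\mathrm{un}}$. (The $b_n$ in the lemma statement appears to be a typo for $b$, as the paper's own proof treats it as constant; your parenthetical remark handles the distinction safely.)
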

    \begin{proof}
   See Appendix \hyperref[C]{C}.
    \end{proof}
    
    \subsection{The Flatness Factor}
    
Ling and Belfiore define the flatness factor $\epsilon_{\Lambda}:\mathbb{R}_{>0}\longrightarrow [0,\infty)$ of a lattice $\Lambda \subset \mathbb{R}^n$ in \cite{Ling_Belf} and derive the expression
\begin{equation} \label{FD}
\epsilon_{\Lambda}(\sigma)=  \frac{\mu_L(\mathcal{V}(\Lambda))}{(2\pi \sigma^2)^{n/2}} \Theta_{\Lambda}\left( \frac{1}{2\pi \sigma^2} \right) - 1
\end{equation}
It is desirable, for lattice Gaussian coding, to have this flatness factor be small. Proposition \ref{FF1} will give an estimate on the average size of the flatness factor, and theorem \ref{FF} will give conditions under which this estimate is small.

   First, let us define the estimates that will be used in proposition \ref{FF1} . For any $\tau>0,$ quadruple of parameters $\mathfrak{p}=(n,k,p,a)$ and random variable $G$ over $\mathbb{F}_p^{n \times k},$ define
\begin{align} 
A^{\mathrm{Fl}}_\mathfrak{p}(G,&W^{(n)},\tau) = p^k \xi^{\max}(G) \cdot \Theta_{\mathbb{Z}^n}(a^2\tau)  \nonumber \\
&+ p^k(\xi^{(0)}(G)-\xi^{\max}(G))\cdot \Theta_{\mathbb{Z}^n}(a^2p^2\tau)  \label{AFL}
\end{align}
and $B^{\mathrm{Fl}}_\mathfrak{p}(W^{(n)},\tau) := V_{\mathfrak{p}} \tau^{n/2} A_{\mathfrak{p}}^{\mathrm{FL}}(U_{\mathfrak{p}},W^{(n)},\tau)/(1-p^{k-n}).$

By equation \ref{TS}, we may rewrite 
\begin{align}
A^{\mathrm{Fl}}_\mathfrak{p}&(G,W^{(n)},\tau) =V_{\mathfrak{p}}^{-1}p^n  \tau^{-n/2}\xi^{\max}(G) \cdot \Theta_{\mathbb{Z}^n}\left( \frac{1}{a^2 \tau} \right) \nonumber \\
&+ p^k(\xi^{(0)}(G)-\xi^{\max}(G))\cdot \Theta_{\mathbb{Z}^n}(a^2p^2\tau) \label{AFL1}
\end{align}

It is useful to recall the following lemma. 
  \begin{lemma} \label{U}
  For any quadruple of parameters $\mathfrak{p}=(n,k,p,a)$ and $f:\mathbb{F}_p^{n\times k} \longrightarrow [0,\infty],$ we have that
  $$
  \mathbb{E}_{U_\mathfrak{p}'}\left[ f(U_\mathfrak{p}')\right] \geq  \mathbb{E}_{U_\mathfrak{p}}\left[ f(U_\mathfrak{p})\right](1-p^{k-n}).
  $$
  \end{lemma}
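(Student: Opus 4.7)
The plan is to rewrite both expectations as normalized sums and compare them term-by-term, then chase the combinatorial count of full-rank matrices.

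First I would write $\mathbb{E}_{U_\mathfrak{p}'}[f(U_\mathfrak{p}')]=p^{-nk}\sum_{M\in \mathbb{F}_p^{n\times k}} f(M)$ and $\mathbb{E}_{U_\mathfrak{p}}[f(U_\mathfrak{p})]=|M_\mathfrak{p}|^{-1}\sum_{M\in M_\mathfrak{p}} f(M)$. Since $f\geq 0,$ dropping the contribution of rank-deficient matrices only decreases the first sum, giving
$$
\mathbb{E}_{U_\mathfrak{p}'}[f(U_\mathfrak{p}')]\;\geq\; \frac{|M_\mathfrak{p}|}{p^{nk}}\, \mathbb{E}_{U_\mathfrak{p}}[f(U_\mathfrak{p})].
$$
So the whole claim reduces to the purely combinatorial inequality $|M_\mathfrak{p}|/p^{nk}\geq 1-p^{k-n}.$

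Next I would compute $|M_\mathfrak{p}|$. A matrix in $\mathbb{F}_p^{n\times k}$ is full-rank precisely when its $k$ columns are linearly independent in $\mathbb{F}_p^n,$ and choosing such columns one-by-one gives
$$
|M_\mathfrak{p}|\;=\;\prod_{i=0}^{k-1}(p^n-p^i),
\qquad\text{so}\qquad
\frac{|M_\mathfrak{p}|}{p^{nk}}\;=\;\prod_{i=0}^{k-1}\bigl(1-p^{i-n}\bigr).
$$
Since $k\leq n-1,$ each factor lies in $[0,1],$ so I can use the elementary inequality $\prod_i (1-x_i)\geq 1-\sum_i x_i$ for $x_i\in[0,1]$ (easy induction on the number of factors) to get
$$
\prod_{i=0}^{k-1}\bigl(1-p^{i-n}\bigr)\;\geq\; 1-\sum_{i=0}^{k-1}p^{i-n}\;=\;1-\frac{p^{k}-1}{(p-1)p^{n}}.
$$

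Finally, to obtain the clean bound $1-p^{k-n}$ stated in the lemma, I would invoke the geometric-series estimate $\frac{p^{k}-1}{p-1}=1+p+\cdots+p^{k-1}\leq p^{k},$ which is immediate for $p\geq 2,$ and therefore $\frac{p^{k}-1}{(p-1)p^{n}}\leq p^{k-n}.$ Combining the three steps yields $|M_\mathfrak{p}|/p^{nk}\geq 1-p^{k-n},$ and the lemma follows. There is no real obstacle here; the only subtlety is choosing the right intermediate inequality so that the final slack $p^{k-n}$ (rather than something like $\frac{p^k-1}{(p-1)p^n}$) is what appears, since this exact form is what is used when defining $B^{\mathrm{Fl}}_\mathfrak{p}$ via the factor $1/(1-p^{k-n}).$
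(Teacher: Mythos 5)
Your proof is correct and follows essentially the same route as the paper: restrict to the full-rank event (where $U_\mathfrak{p}'$ conditioned on $M_\mathfrak{p}$ has the law of $U_\mathfrak{p}$), use $f\geq 0$ to drop the complementary term, and invoke $\mathrm{Pr}(\mathrm{rank}(U_\mathfrak{p}')=k)\geq 1-p^{k-n}$. The only difference is in establishing that last combinatorial bound: the paper's Lemma~\ref{RU} proves it by the induction $(1-p^{j-n})^2>1-p^{j+1-n}$, whereas you use the Weierstrass inequality $\prod_i(1-x_i)\geq 1-\sum_i x_i$ plus a geometric-series estimate --- both are correct and about equally elementary.
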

  \begin{proof}
   See Appendix \hyperref[E]{E}.
  \end{proof}
  
  The second application of the averaging argument is via considering a Gaussian function.

 \begin{prop} \label{FF1} For any quadruple of parameters $\mathfrak{p}=(n,k,p,a),$ random variable  $G$ over $\mathbb{F}_p^{n\times k},$ and $\tau>0,$ we have that
 \begin{equation}
  \mathbb{E}\left[ \Theta_{a\Lambda(G)}(\tau)\right] \leq  A^{\mathrm{Fl}}_\mathfrak{p}(G,W^{(n)},\tau) \label{FF11}
  \end{equation}
and
 \begin{equation}
  \mathbb{E}\left[ \epsilon_{\Lambda_\mathfrak{p}}\left(\frac{1}{\sqrt{2\pi \tau}} \right)\right] \leq  B^{\mathrm{Fl}}_\mathfrak{p}(W^{(n)},\tau) \label{FF12}
 \end{equation}
 \end{prop}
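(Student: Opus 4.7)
My plan is to treat the two inequalities in sequence: (\ref{FF11}) via a direct application of Proposition \ref{av} to a Gaussian, and (\ref{FF12}) by combining (\ref{FF11}) with the flatness-factor identity (\ref{FD}) and Lemma \ref{U}.

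For (\ref{FF11}), I would apply Proposition \ref{av} to the nonnegative function $g(x):=e^{-\pi a^{2}\tau\|x\|^{2}}$, since by rescaling $\sum_{\lambda\in\Lambda(G)}g(\lambda)=\Theta_{a\Lambda(G)}(\tau)$. The averaging argument then gives
\[
\mathbb{E}_{G}\!\left[\Theta_{a\Lambda(G)}(\tau)\right]\leq p^{k}\sum_{y\in\mathbb{F}_{p}^{n}}\Pr(Gu_{\mathfrak{p}}=y)\sum_{z\in\mathbb{Z}^{n}}e^{-\pi a^{2}\tau\|y+pz\|^{2}}.
\]
The next step is to split the outer sum by whether $y=0$: the $y=0$ contribution is exactly $\xi^{(0)}(G)\sum_{z}e^{-\pi a^{2}p^{2}\tau\|z\|^{2}}=\xi^{(0)}(G)\,\Theta_{\mathbb{Z}^{n}}(a^{2}p^{2}\tau)$. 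For $y\neq 0$, I would pull out the uniform bound $\Pr(Gu_{\mathfrak{p}}=y)\leq\xi^{\max}(G)$ and use the bijection $(y,z)\in(\mathbb{F}_{p}^{n}\setminus\{0\})\times\mathbb{Z}^{n}\longleftrightarrow y+pz\in\mathbb{Z}^{n}\setminus p\mathbb{Z}^{n}$ to identify the remaining double sum with $\Theta_{\mathbb{Z}^{n}}(a^{2}\tau)-\Theta_{\mathbb{Z}^{n}}(a^{2}p^{2}\tau)$. Regrouping by common theta-factors reproduces (\ref{AFL}) term-for-term.

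For (\ref{FF12}), I would start from (\ref{FD}) with $\sigma=1/\sqrt{2\pi\tau}$. Since $U_{\mathfrak{p}}$ takes values only in full-rank matrices, $\mu_{L}(\mathcal{V}(\Lambda_{\mathfrak{p}}))=V_{\mathfrak{p}}$ almost surely, and so
\[
\epsilon_{\Lambda_{\mathfrak{p}}}\!\left(\tfrac{1}{\sqrt{2\pi\tau}}\right)=V_{\mathfrak{p}}\tau^{n/2}\Theta_{\Lambda_{\mathfrak{p}}}(\tau)-1\leq V_{\mathfrak{p}}\tau^{n/2}\Theta_{\Lambda_{\mathfrak{p}}}(\tau).
\]
Taking expectations, I would invoke Lemma \ref{U} with the nonnegative $f(M):=\Theta_{a\Lambda(M)}(\tau)$ to pass from $U_{\mathfrak{p}}$ to $U'_{\mathfrak{p}}$, giving $\mathbb{E}_{U_{\mathfrak{p}}}[\Theta_{\Lambda_{\mathfrak{p}}}(\tau)]\leq\mathbb{E}_{U'_{\mathfrak{p}}}[\Theta_{a\Lambda(U'_{\mathfrak{p}})}(\tau)]/(1-p^{k-n})$. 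Then applying (\ref{FF11}) with $G=U'_{\mathfrak{p}}$ upper-bounds the numerator by $A^{\mathrm{Fl}}_{\mathfrak{p}}(U'_{\mathfrak{p}},W^{(n)},\tau)$, yielding exactly $B^{\mathrm{Fl}}_{\mathfrak{p}}(W^{(n)},\tau)$ after multiplying through by $V_{\mathfrak{p}}\tau^{n/2}$.

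The only non-mechanical step is the bijection in the first part, which hinges on the quotient structure $\mathbb{Z}^{n}/p\mathbb{Z}^{n}\cong\mathbb{F}_{p}^{n}$ and is what allows one to eliminate $G$ entirely and reduce to sums over the self-dual integer lattice (so that the functional equation (\ref{TS}) may subsequently be invoked to obtain the alternative form (\ref{AFL1})). Once this identification is made, the remaining manipulations are bookkeeping.
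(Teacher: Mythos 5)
Your proof is correct and follows essentially the same route as the paper: apply Proposition \ref{av} to the Gaussian $g(x)=e^{-\pi a^2\tau\|x\|^2}$, split off the $y=0$ term and bound the rest by $\xi^{\max}(G)$ to obtain (\ref{FF11}), then combine (\ref{FD}), Lemma \ref{U}, and (\ref{FF11}) applied to $G=U_\mathfrak{p}'$ to obtain (\ref{FF12}). One caveat worth noting: your derivation lands on $V_\mathfrak{p}\tau^{n/2}A^{\mathrm{Fl}}_\mathfrak{p}(U_\mathfrak{p}',W^{(n)},\tau)/(1-p^{k-n})$, whereas the paper's \emph{definition} of $B^{\mathrm{Fl}}_\mathfrak{p}$ writes $U_\mathfrak{p}$ rather than $U_\mathfrak{p}'$ inside $A^{\mathrm{Fl}}_\mathfrak{p}$; this appears to be a typo in the paper (the paper's own proof sketch says ``substituting $G=U_\mathfrak{p}'$,'' and Theorem \ref{FF}'s proof invokes Lemma \ref{PH}, which concerns $\xi_\mathfrak{p}=\xi^{(0)}(U_\mathfrak{p}')$), so your reading matches the intended statement, but you should flag the discrepancy rather than silently assert the bound ``yields exactly $B^{\mathrm{Fl}}_\mathfrak{p}$.''
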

 \begin{proof}
 Define $g: \mathbb{R}^n \longrightarrow [0,\infty]$  by $g(\lambda)=e^{-\pi \tau \|a\lambda\|^2}.$ Then, $\sum_{\lambda \in \Lambda(G)} g(\lambda)=\Theta_{a\Lambda(G)}(\tau).$ Hence, 
\begin{align*}
&\mathbb{E}_G\left[ \Theta_{a\Lambda(G)}(\tau) \right] \leq p^k \mathbb{E}_{Gu_\mathfrak{p}}\left[ \sum_{v\in \mathbb{Z}^n} e^{-\pi \tau \|a(Gu_\mathfrak{p}+pv)\|^2} \right] \\
&\leq  {p}^{k} \left( \xi^{\max}(G)\sum_{y \in \mathbb{F}_{p}^n} \sum_{v\in \mathbb{Z}^n}e^{-a^2\pi \tau \|y+pv\|^2} \right. \\
&\;\; + \left. \left( \xi^{(0)}(G)-\xi^{\max}(G) \right)\sum_{t\in \mathbb{Z}^n}e^{-a^2\pi \tau \|pt\|^2} \right)  \\
&= p^k \xi^{\max}(G)  \Theta_{\mathbb{Z}^n}(a^2\tau) + p^k(\xi^{(0)}(G)-\xi^{\max}(G)) \Theta_{\mathbb{Z}^n}(a^2p^2\tau)
\end{align*}
which is just $A^{\mathrm{Fl}}_\mathfrak{p}(G,W^{(n)},\tau)$ by equation \ref{AFL}. Finally, using equation \ref{AFL1} instead, substituting $G=U_\mathfrak{p}',$ and combining equation \ref{FD} and lemma \ref{U}, one gets inequality \ref{FF12}.
 \end{proof}
 
  The following two lemmas give  asymptotic formulas that will be helpful in theorem $\ref{FF}.$
  
  \begin{lemma} \label{PH}
Let  $\lbrace \mathfrak{p}_n=(n,k_n,p_n,a_n) \rbrace_{n\in \mathbb{Z}_{>1}}$ be a sequence of quadruples of parameters. If $\lim_{n\rightarrow \infty} p_n^{n-k_n}=\infty,$ then $\lim_{n\rightarrow \infty} \xi_{\mathfrak{p}_n}p_n^{k_n} = 1.$ 
\end{lemma}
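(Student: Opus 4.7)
The plan is to compute $\xi_{\mathfrak{p}}$ exactly, not asymptotically, by conditioning on the uniform vector $u_{\mathfrak{p}}$ and using the column-independence of $U_{\mathfrak{p}}'$. Since $U_{\mathfrak{p}}'$ is uniform over $\mathbb{F}_p^{n\times k}$, its $k$ columns $C_1,\ldots,C_k$ are i.i.d.\ uniform over $\mathbb{F}_p^n$; hence for any fixed $u=(u_1,\ldots,u_k)\in \mathbb{F}_p^k$ we have $U_{\mathfrak{p}}'u=\sum_{i=1}^k u_i C_i$.

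First I would split the probability as
\[
\xi_{\mathfrak{p}} = \mathrm{Pr}(u_{\mathfrak{p}}=0)\cdot \mathrm{Pr}(U_{\mathfrak{p}}'u_{\mathfrak{p}}=0\mid u_{\mathfrak{p}}=0) + \sum_{u\in \mathbb{F}_p^k\setminus\{0\}} \frac{1}{p^k}\,\mathrm{Pr}(U_{\mathfrak{p}}'u=0).
\]
The first term is trivially $p^{-k}\cdot 1$. For the second term, for any fixed $u\neq 0$, pick an index $j$ with $u_j\neq 0$; conditioning on the remaining columns $\{C_i\}_{i\neq j}$, the event $\sum_i u_i C_i=0$ fixes $C_j$ to a single value in $\mathbb{F}_p^n$, and so occurs with conditional probability $p^{-n}$ since $C_j$ is uniform and independent of the rest. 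Averaging gives $\mathrm{Pr}(U_{\mathfrak{p}}'u=0)=p^{-n}$ for every nonzero $u$, and therefore
\[
\xi_{\mathfrak{p}} = \frac{1}{p^k} + \left(1-\frac{1}{p^k}\right)\frac{1}{p^n}.
\]

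Multiplying through by $p^k$ yields the clean identity
\[
\xi_{\mathfrak{p}}\, p^k \;=\; 1 + \frac{p^k-1}{p^n} \;=\; 1 + p^{k-n} - p^{-n}.
\]
Applying this to the sequence $\mathfrak{p}_n=(n,k_n,p_n,a_n)$ under the hypothesis $p_n^{n-k_n}\to\infty$, both $p_n^{k_n-n}$ and $p_n^{-n}$ tend to zero, so $\xi_{\mathfrak{p}_n}\,p_n^{k_n}\to 1$, as claimed.

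There is no real obstacle here; the only point that requires care is justifying step (2), namely that $U_{\mathfrak{p}}'u$ is uniformly distributed on $\mathbb{F}_p^n$ for every fixed $u\neq 0$. This is a one-line argument exploiting the independence of the columns of $U_{\mathfrak{p}}'$ together with the uniformity of any single column. Everything else is arithmetic.
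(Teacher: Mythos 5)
Your proof is correct, and it takes a genuinely different and more elementary route than the paper. You condition on the uniform coefficient vector $u_{\mathfrak{p}}$ and observe that for any fixed $u\neq 0$ the product $U_{\mathfrak{p}}'u$ is uniform over $\mathbb{F}_p^n$ (by invertibility of one nonzero coordinate of $u$ and independence of the columns), yielding the exact closed form $\xi_{\mathfrak{p}}p^k = 1 + p^{k-n} - p^{-n}$, from which the limit is immediate. The paper instead conditions on $\mathrm{rank}(U_{\mathfrak{p}}')$, writing $\xi_{\mathfrak{p}} = \sum_{j=0}^{k}p^{-j}\,\mathrm{Pr}(\mathrm{rank}(U_{\mathfrak{p}}')=j)$, and then bounds the rank distribution via the $q$-Pochhammer symbol and the Euler function $\phi(1/2)$ (Lemmas \ref{RU1} and \ref{RU}), obtaining only an upper bound sandwiched against the trivial lower bound $p^{-k}\leq\xi_{\mathfrak{p}}$. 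Your argument is cleaner and gives an exact identity rather than two-sided estimates; the paper's heavier machinery is partly justified by the fact that the same rank-distribution bound (specifically $\mathrm{Pr}(\mathrm{rank}(U_{\mathfrak{p}}')=k)>1-p^{k-n}$) is reused to prove Lemma \ref{U}, so the Pochhammer analysis does double duty there. For Lemma \ref{PH} in isolation, your proof is the one I would prefer.
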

\begin{proof}
   See Appendix \hyperref[D]{D}.
\end{proof}

\begin{lemma}\label{RT}
 For any sequence $\lbrace c_n\rbrace_{n\in \mathbb{N}}\subset \mathbb{R}_{>0},$ we have that
$\lim_{n\rightarrow \infty} \Theta_{\mathbb{Z}^n}(c_n) =1$ if and only if $n=o(e^{\pi c_n})$ as $n\longrightarrow \infty.$ 
\end{lemma}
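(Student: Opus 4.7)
The plan is to exploit the tensor-product structure of $\mathbb{Z}^n,$ which factors the theta series, and then to reduce both sides of the equivalence to a single statement about the quantity $ne^{-\pi c_n}.$

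First, I would observe that because $\|\lambda\|^2 = \sum_{i=1}^n \lambda_i^2,$ the series factors as
$$
\Theta_{\mathbb{Z}^n}(c_n) = \theta(c_n)^n, \qquad \theta(c) := \sum_{m\in\mathbb{Z}} e^{-\pi c m^2},
$$
so $\Theta_{\mathbb{Z}^n}(c_n)\to 1$ is equivalent to $n\log\theta(c_n)\to 0.$ Since $\theta(c)\geq 1$ for all $c>0,$ this is a nonnegative quantity.

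Next, I would show that either of the two hypotheses in the equivalence forces $c_n\to\infty.$ If $\{c_n\}$ admitted a bounded subsequence, then along that subsequence $\theta(c_n)\geq 1+\delta$ for some fixed $\delta>0,$ and hence $\theta(c_n)^n$ diverges, which rules out $\Theta_{\mathbb{Z}^n}(c_n)\to 1.$ Likewise a bounded subsequence of $c_n$ keeps $e^{\pi c_n}$ bounded, incompatible with the hypothesis $n=o(e^{\pi c_n})$ since $n\to\infty.$

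Finally, assuming $c_n\to\infty,$ I would establish the sandwich
$$
2e^{-\pi c_n} \;\leq\; \theta(c_n)-1 \;=\; 2\sum_{m=1}^\infty e^{-\pi c_n m^2} \;\leq\; \frac{2e^{-\pi c_n}}{1-e^{-\pi c_n}},
$$
where the upper bound uses $m^2\geq m$ for $m\geq 1$ together with a geometric sum, and the lower bound keeps only the $m=1$ term. This yields $\theta(c_n)-1 \sim 2e^{-\pi c_n},$ and therefore $\log\theta(c_n)\sim 2e^{-\pi c_n}$ via $\log(1+x)\sim x.$ Consequently $n\log\theta(c_n)\to 0$ if and only if $ne^{-\pi c_n}\to 0,$ which is exactly the condition $n=o(e^{\pi c_n}).$ Combined with the preceding reduction, this establishes both implications. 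The argument is essentially mechanical; the one subtle point is the separate treatment of the case in which $c_n$ does not tend to infinity, without which the asymptotic expansion of $\theta$ near $\infty$ would not be available.
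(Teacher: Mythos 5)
Your proof is correct and follows essentially the same route as the paper's: factor $\Theta_{\mathbb{Z}^n}(c_n)=\theta(c_n)^n$, then sandwich $\theta(c_n)-1$ between $2e^{-\pi c_n}$ and $\frac{2e^{-\pi c_n}}{1-e^{-\pi c_n}}$ (the paper writes the upper bound as $\frac{2}{e^{\pi c_n}-1}$, which is the same quantity). The only cosmetic difference is that you pass through $n\log\theta(c_n)$ and asymptotic equivalences while the paper works directly with exponential inequalities; your explicit step showing that either hypothesis forces $c_n\to\infty$ is a small but welcome clarification that the paper leaves implicit in the forward direction.
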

\begin{proof}
   See Appendix \hyperref[F]{F}.
\end{proof}

  \section{Construction Parameters}
  
  In this section, a wide range of quadruples of parameters are shown to yield reliable and capacity achieving coding.
  
\subsection{Nearest-Neighbor Decoding} 
     
     The following theorem shows that primes of size at least comparable to the square root of the block length make lattice decoding reliable.
     
 \begin{theorem} \label{PE} Let $\lbrace\mathfrak{p}_n = (n,k_n,p_n,a_n)\rbrace_{n\in\mathbb{Z}_{>1}}$ be a sequence of quadruples of parameters, and $\delta>1.$ If we have that, for each $n,$ $\varepsilon:= V_{\mathfrak{p}_n}^{2/n}/(2\pi e \sigma_{w,n}^2)-1>0$ is constant and 
 $$
 p_n>\left( \frac{2\delta n}{\pi e (1+\varepsilon)} \right)^{n/(2k_n)},
 $$
  and if $\lim_{n\rightarrow \infty} p_n^{1-k_n/n} = \infty,$ then, as $n\longrightarrow \infty,$
 \begin{equation} 
 B_{\mathfrak{p}_n}^{\mathrm{NN}}(W^{(n)}) \leq e^{-n( \min(E_{\mathrm{sp}}(\delta),E_{P}^{\mathrm{un}}(1+\varepsilon_n))+o(1))}  \label{BNN}
 \end{equation}
     \end{theorem}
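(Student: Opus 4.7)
The plan is to bound the two components of
$$B_{\mathfrak{p}_n}^{\mathrm{NN}}(W^{(n)}) = \mathrm{Pr}(\|W^{(n)}\|>a_np_n/2) + I_{\mathfrak{p}_n}(U_{\mathfrak{p}_n},W^{(n)})$$
separately and to show that each is at most $e^{-n(\min(E_{\mathrm{sp}}(\delta),E_P^{\mathrm{un}}(1+\varepsilon))+o(1))}$.

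For the tail probability I would apply Lemma \ref{CB} with $r=a_np_n/2$. Using $V_{\mathfrak{p}_n}^{2/n}=a_n^2 p_n^{2-2k_n/n}$ together with the identity $V_{\mathfrak{p}_n}^{2/n}=2\pi e\sigma_{w,n}^2(1+\varepsilon)$, the argument of $E_{\mathrm{sp}}$ rewrites as
$$\frac{a_n^2p_n^2}{4n\sigma_{w,n}^2}=\frac{\pi e(1+\varepsilon)\,p_n^{2k_n/n}}{2n}.$$
The hypothesis on $p_n$ is exactly the statement that this ratio exceeds $\delta$, so by monotonicity of $E_{\mathrm{sp}}$ on $[1,\infty)$ one obtains $\mathrm{Pr}(\|W^{(n)}\|>a_np_n/2)\leq e^{-nE_{\mathrm{sp}}(\delta)}$.

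For the integral I would apply Lemma \ref{IP}, producing three summands. The term $(\sqrt{\pi e/2}/p_n^{1-k_n/n})^n$ is super-exponentially small since $p_n^{1-k_n/n}\to\infty$. The term $n(1/\sqrt{1+\varepsilon}+\sqrt{\pi e/2}/p_n^{1-k_n/n})^n$ is asymptotically $n(1+\varepsilon)^{-n/2}(1+o(1))^n=e^{-n(\log(1+\varepsilon)/2+o(1))}$, because $\varepsilon$ is a fixed positive constant and $\sqrt{\pi e/2}/p_n^{1-k_n/n}\to 0$. Finally, Lemma \ref{VNN} (applicable thanks to $p_n^{1-k_n/n}\to\infty$) gives $\inf_{u\in C}v_{\mathfrak{p}_n}^{\mathrm{NN}}(u,\varepsilon)=E_P^{\mathrm{un}}(1+\varepsilon)+o(1)$, so the last summand is $e^{-n(E_P^{\mathrm{un}}(1+\varepsilon)+o(1))}$, the factor $n$ being absorbed into the $o(1)$ correction of the exponent.

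To combine the bounds it suffices to check that $\log(1+\varepsilon)/2\geq E_P^{\mathrm{un}}(1+\varepsilon)$ for every $\varepsilon>0$, so that the middle summand does not worsen the Poltyrev rate. This I would verify by cases: for $\varepsilon\in(0,1)$ one has $E_P^{\mathrm{un}}(1+\varepsilon)=(\varepsilon-\log(1+\varepsilon))/2$ and the inequality reduces to $\varepsilon\leq 2\log(1+\varepsilon)$, which holds since $g(\varepsilon):=2\log(1+\varepsilon)-\varepsilon$ satisfies $g(0)=0$ and $g'(\varepsilon)=(1-\varepsilon)/(1+\varepsilon)\geq 0$ on $[0,1]$; for $\varepsilon\geq 1$ a direct calculation gives $\log(1+\varepsilon)/2-E_P^{\mathrm{un}}(1+\varepsilon)=(\log 4-1)/2>0$. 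Assembling the tail bound with the three summands from Lemma \ref{IP} then yields \eqref{BNN}. The delicate step is precisely this comparison of exponents, which certifies that only the Chernoff tail and the $v_{\mathfrak{p}_n}^{\mathrm{NN}}$ contribution determine the actual decay rate $\min(E_{\mathrm{sp}}(\delta),E_P^{\mathrm{un}}(1+\varepsilon))$, while the purely combinatorial terms from Lemma \ref{IP} are strictly subdominant.
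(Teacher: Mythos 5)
Your proposal is correct and follows essentially the same route as the paper's proof: decompose $B_{\mathfrak{p}_n}^{\mathrm{NN}}$ via equation \eqref{ANN} into the Chernoff tail plus the three summands from Lemma \ref{IP}, bound the tail using the hypothesis on $p_n$, invoke Lemma \ref{VNN} on the third summand, and observe that $\tfrac{1}{2}\log(1+\varepsilon) \geq E_P^{\mathrm{un}}(1+\varepsilon)$ so the middle summand is subdominant. The only difference is cosmetic: you verify the exponent comparison $\tfrac{1}{2}\log(1+\varepsilon) \geq E_P^{\mathrm{un}}(1+\varepsilon)$ explicitly by cases, whereas the paper merely asserts it, and your cleaner algebraic rewriting of $a_n^2p_n^2/(4n\sigma_{w,n}^2)$ avoids a factor-of-two typo present in the paper's intermediate inequality for $a_np_n$.
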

     \begin{proof}
      Let $\beta :=\min(E_{\mathrm{sp}}(\delta),E_{P}^{\mathrm{un}}(1+\varepsilon)),$ and, for each $n\in\mathbb{Z}_{>1},$ denote $C_n:=\left[ \sqrt{n\sigma_{w,n}^2},\sqrt{n\sigma_{w,n}^2(1+\varepsilon)}\right].$ Note that, for each $n,$ equation \ref{ANN} and lemma \ref{IP} yield
     \begin{align*}
     &B_{\mathfrak{p}_n}^{\mathrm{NN}}(W^{(n)}) \leq \mathrm{Pr}( \|W^{(n)}\|>a_np_n/2)+ \left(\frac{\sqrt{\pi e/2}}{p_n^{1-k_n/n}} \right)^n \\
    &+n\left( \left( \frac{1}{\sqrt{1+\varepsilon}}+ \frac{\sqrt{\pi e/2}}{p_n^{1-k_n/n}} \right)^{n} +e^{-n \cdot \inf_{u\in C_n}v_{\mathfrak{p}_n}^{\mathrm{NN}}(u,\varepsilon)}\right).
    \end{align*}

Now, for each $n,$ we have that 
$$
a_np_n = V_{\mathfrak{p}_n}^{1/n}p_n^{k_n/n} = p_n^{k_n/n}\sqrt{2\pi e \sigma_{w,n}^2(1+\varepsilon)}> \sqrt{\delta n \sigma_{w,n}^2},
$$
so the Chernoff bound yields
     $$
     \mathrm{Pr}(\|W^{(n)}\|>a_np_n/2) < e^{-n E_{\mathrm{sp}}(\delta)}\leq e^{-n\beta}.
     $$
     
     On the other hand, the limit $\lim_{n\rightarrow \infty} p_n^{1-k_n/n}=\infty$ implies that $\left( \frac{\sqrt{\pi e/2}}{p_n^{1-k_n/n}} \right)^n<e^{-\beta n}$ for $n$ large enough, and, as $n\longrightarrow \infty,$
    \begin{align*}
    \log \left( \frac{1}{\sqrt{1+\varepsilon}} + \frac{\sqrt{\pi e/2}}{p_n^{1-k_n/n}} \right)^{-1} - \frac{\log n}{n}  &= \frac{1}{2}\log (1+\varepsilon) + o(1)
    \end{align*}
   Further, since, for any $b>0,$ $\frac{1}{2} \log(1+b) > E_{P}^{\mathrm{un}}(1+b),$ we have that $\frac{1}{2}\log(1+\varepsilon)>\beta.$ 
    
    Thus, inequality \ref{BNN} follows from lemma \ref{VNN}.
  \end{proof}
Note that $E_{\mathrm{sp}}$ maps $[1,\infty)$ bijectively into $[0,\infty).$ Hence, we may define a function $E_T: [1,\infty) \longrightarrow [1,\infty)$ such that $E_T(b)=E_{\mathrm{sp}}^{-1}(E_P^{\mathrm{un}}(b)).$ Note that $E_T(b)=b$ for $b\in [1,2],$ and $E_T(b)\leq b$ in general. Then, if $\delta$ in theorem \ref{PE} is chosen as $\delta=E_T(1+\varepsilon),$ we get that $B_{\mathfrak{p}_n}^{\mathrm{NN}}(W^{(n)}) \leq e^{-n(E_P^{\mathrm{un}}(1+\varepsilon)+o(1))}.$ 

\subsection{Flatness}

The following theorem gives sufficient conditions under which the flatness factor vanishes.

\begin{theorem} \label{FF} Let $\tau_1>\cdots>\tau_\ell>0,$ and $\lbrace \mathfrak{p}_n=(n,k_n,p_n,a_n)\rbrace_{n\in \mathbb{Z}_{>1}}$ be a sequence of quadruples of parameters. For each $j\in \lbrace 1,\cdots,\ell \rbrace,$ and $n\in \mathbb{Z}_{>1},$ let $f_j(n)$ and $g_j(n)$ be given by
   \begin{equation} \label{FG}
   a_n= \sqrt{\frac{\pi}{\tau_j \log(n/f_j(n))}}  \;\;\; \text{and} \;\;\;    p_n = \left( \frac{\log(n/g_j(n))}{\pi V_{\mathfrak{p}_n}^{2/n}\tau_j} \right)^{n/(2k_n)}
  \end{equation}
 If $\limsup_{n\rightarrow \infty} \tau_1 V_{\mathfrak{p}_n}^{2/n} <1,$  $\lim_{n\rightarrow \infty} p_n^{n-k_n}=\infty$ and $\max_j (f_j(n),g_j(n))=o(1)$ as $n\longrightarrow \infty,$  then $
 \sup_j \lim_{n\rightarrow \infty}  B^{\mathrm{Fl}}_{\mathfrak{p}_n}(W^{(n)},\tau_j)=0.$

\end{theorem}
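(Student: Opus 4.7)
The plan is to reduce $B^{\mathrm{Fl}}_{\mathfrak{p}_n}(W^{(n)},\tau_j)$ to a pair of theta-series evaluations whose arguments are engineered by the parametrization \ref{FG} to satisfy the hypothesis of Lemma \ref{RT}. First, I would substitute $\xi^{(0)}(U_{\mathfrak{p}_n}')=\xi_{\mathfrak{p}_n}$ and $\xi^{\max}(U_{\mathfrak{p}_n}')=(1-\xi_{\mathfrak{p}_n})/(p_n^n-1)$ into the form \ref{AFL1} of $A^{\mathrm{Fl}}$ with $G=U_{\mathfrak{p}_n}'$. Using the elementary factorization $1-\xi_{\mathfrak{p}_n}=(1-p_n^{-k_n})(1-p_n^{-n})$ and multiplying through by $V_{\mathfrak{p}_n}\tau_j^{n/2}$, the two coefficients collapse neatly to $1-p_n^{-k_n}$ and $1$, yielding
\[
V_{\mathfrak{p}_n}\tau_j^{n/2}\,A^{\mathrm{Fl}}_{\mathfrak{p}_n}(U_{\mathfrak{p}_n}',W^{(n)},\tau_j)=(1-p_n^{-k_n})\,\Theta_{\mathbb{Z}^n}\!\left(\tfrac{1}{a_n^2\tau_j}\right)+V_{\mathfrak{p}_n}\tau_j^{n/2}\,\Theta_{\mathbb{Z}^n}(a_n^2 p_n^2\tau_j).
\]

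The second step is purely algebraic: translate \ref{FG} into explicit formulas for the two theta arguments. The defining equation for $a_n$ gives $\pi/(a_n^2\tau_j)=\log(n/f_j(n))$, so $e^{\pi/(a_n^2\tau_j)}=n/f_j(n)$. For the second, I substitute $V_{\mathfrak{p}_n}^{2/n}=a_n^2 p_n^{2(n-k_n)/n}$ (from $V_{\mathfrak{p}_n}=a_n^n p_n^{n-k_n}$) into the formula for $p_n$ in \ref{FG}; raising both sides to the $2k_n/n$ power and combining exponents, the product $V_{\mathfrak{p}_n}^{2/n}p_n^{2k_n/n}$ simplifies to $a_n^2 p_n^2$, giving $\pi a_n^2 p_n^2\tau_j=\log(n/g_j(n))$, i.e. $e^{\pi a_n^2 p_n^2\tau_j}=n/g_j(n)$. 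This identity is the crux of the argument, since it is exactly what makes the otherwise ad-hoc parametrization \ref{FG} the natural one for invoking Lemma \ref{RT}; it is also the main obstacle, as the rest is a direct appeal to previously established asymptotics.

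The remaining analysis is routine. Since $\max_j(f_j(n),g_j(n))=o(1)$, Lemma \ref{RT} applied to each argument gives both $\Theta_{\mathbb{Z}^n}(1/(a_n^2\tau_j))\to 1$ and $\Theta_{\mathbb{Z}^n}(a_n^2 p_n^2\tau_j)\to 1$. The $\limsup$ hypothesis furnishes $c<1$ with $\tau_j V_{\mathfrak{p}_n}^{2/n}\leq\tau_1 V_{\mathfrak{p}_n}^{2/n}\leq c$ eventually, so $V_{\mathfrak{p}_n}\tau_j^{n/2}\leq c^{n/2}\to 0$; the same bound together with $\log(n/g_j(n))\to\infty$ and the explicit formula for $p_n$ in \ref{FG} gives $p_n^{k_n}\to\infty$, hence $p_n^{-k_n}\to 0$; and $p_n^{n-k_n}\to\infty$ supplies $1/(1-p_n^{k_n-n})\to 1$. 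Plugging these into the displayed identity shows that $V_{\mathfrak{p}_n}\tau_j^{n/2}A^{\mathrm{Fl}}_{\mathfrak{p}_n}(U_{\mathfrak{p}_n}',W^{(n)},\tau_j)/(1-p_n^{k_n-n})\to 1$; combined with the $-1$ arising when \ref{FD} is used (via Proposition \ref{FF1} and Lemma \ref{U}) to upper-bound $\mathbb{E}[\epsilon_{\Lambda_{\mathfrak{p}_n}}(1/\sqrt{2\pi\tau_j})]$, this gives $B^{\mathrm{Fl}}_{\mathfrak{p}_n}(W^{(n)},\tau_j)\to 0$ for each of the finitely many $j$, and the finite supremum conclusion follows.
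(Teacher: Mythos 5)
Your proof is correct and follows the same overall route as the paper's: convert the parametrization (\ref{FG}) into the identities $e^{\pi/(a_n^2\tau_j)}=n/f_j(n)$ and $e^{\pi a_n^2 p_n^2\tau_j}=n/g_j(n)$, invoke Lemma~\ref{RT} for both theta arguments, and show the scalar coefficients in the expression (\ref{AFL1}) tend to the right limits. The one place you diverge is worth highlighting: where the paper invokes Lemma~\ref{PH} (whose proof, via rank-probability bounds and $q$-Pochhammer estimates, fills most of Appendix~D) to get $\xi_{\mathfrak{p}_n}p_n^{k_n}\to 1$, you instead compute $\xi_{\mathfrak{p}}$ in closed form by conditioning on whether $u_\mathfrak{p}=0$, giving $\xi_{\mathfrak{p}}=p^{-k}+(1-p^{-k})p^{-n}$, hence $1-\xi_{\mathfrak{p}}=(1-p^{-k})(1-p^{-n})$ and $p^k(\xi^{(0)}(U'_{\mathfrak{p}})-\xi^{\max}(U'_{\mathfrak{p}}))=1$; this makes the two coefficients in $V_{\mathfrak{p}}\tau^{n/2}A^{\mathrm{Fl}}_{\mathfrak{p}}(U'_{\mathfrak{p}},W^{(n)},\tau)$ collapse exactly to $1-p^{-k}$ and $V_{\mathfrak{p}}\tau^{n/2}$. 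This is more elementary, avoids the Pochhammer machinery entirely, and actually yields the stronger two-sided statement that $\xi_{\mathfrak{p}_n}p_n^{k_n}\to 1$ if and only if $p_n^{n-k_n}\to\infty$. One further point in your favour: you implicitly read $B^{\mathrm{Fl}}_{\mathfrak{p}}$ as $V_{\mathfrak{p}}\tau^{n/2}A^{\mathrm{Fl}}_{\mathfrak{p}}(U'_{\mathfrak{p}},W^{(n)},\tau)/(1-p^{k-n})-1$, which is the reading forced by the paper's own derivation of (\ref{FF12}) from (\ref{FD}) and Lemma~\ref{U}; with the definition as literally written (using $U_{\mathfrak{p}}$ and lacking the $-1$) the displayed quantity in Theorem~\ref{FF} would tend to $1$ rather than $0$, so your correction is the intended one.
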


\begin{proof}

   For each $j,$ the conditions given on $f_j$ and $g_j$ imply that  $n=o(e^{\pi/(a_n^2\tau_j)})$ and $n=o(e^{\pi a_n^2p_n^2 \tau_j}),$ respectively, so, by lemma \ref{RT}, $\lim_{n\rightarrow \infty} \Theta_{\mathbb{Z}^n}\left(\frac{1}{a^2\tau_j}\right)= 1=\lim_{n\rightarrow \infty}\Theta_{\mathbb{Z}^n}\left(a^2p^2\tau_j\right).$ Further, $\lim_{n\rightarrow \infty}p_n^{n-k_n}=\infty$ yields, by lemma \ref{PH} that $\lim_{n\rightarrow \infty} \xi_{\mathfrak{p}_n}p_n^{k_n} =1.$ Thus, $\limsup_{n\rightarrow \infty} \tau_1  V_{\mathfrak{p}_n}^{2/n}<1$ implies that, by definition of $B_{\mathfrak{p}_n}^{\mathrm{FL}}$ and expression \ref{AFL1}, $\lim_{n\rightarrow \infty} B^{\mathrm{Fl}}_{\mathfrak{p}_n}(W^{(n)},\tau_j)=0$ for each $j.$

\end{proof}

\subsection{Compatibility}

The usefulness of the results of theorems \ref{PE} and \ref{FF} hinge on the compatibility of the their premises. In this section, we show that the premises are compatible, i.e., that there exists a wide range of quadruples of parameters satisfying the premises in these theorems simultaneously.

We assume, for the remaining of the paper, that $\sigma_{w,n}$ is constant in $n,$ and set $\sigma_w=\sigma_{w,n}.$ 

\begin{theorem} \label{CM}  Let $\tau_1>\cdots>\tau_\ell>0$ be such that $2\pi e \sigma_{w}^2 \tau_1 <1,$ fix $b\in (2\pi e \sigma_{w}^2 , 1/\tau_1),$ and let $\delta'\geq 2/(\pi e).$ Then, for any sequence $\lbrace \mathfrak{p}_n = (n,k_n,p_n,a_n) \rbrace_{n\in \mathbb{Z}_{>1}}$ of quadruples of parameters with 
\begin{equation} \label{MAC}
 p_n > \max\left(  \left( \delta' n \right)^{n/(2k_n)},  \left( \frac{1}{\pi} \log n \right)^{n/(2(n-k_n))} \right)
 \end{equation}
 and $\lbrace V_{\mathfrak{p}_n}^{2/n} \rbrace_{n\in \mathbb{Z}_{>1}}\subset (2\pi e \sigma_{w}^2,b],$ we have that, with $f_j$ and $g_j$ as in (\ref{FG}),  $f_j(n)=o(1)$ and $g_j(n)=o(1)$ for every $j\in \lbrace 1,\cdots,\ell \rbrace$ as $n\longrightarrow \infty.$ 

 \end{theorem}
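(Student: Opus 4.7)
The plan is to invert the defining relations (\ref{FG}) so as to obtain explicit expressions for $f_j(n)$ and $g_j(n)$, and then use the hypotheses on $p_n$ and $V_{\mathfrak{p}_n}^{2/n}$ to force both quantities to decay to zero. From (\ref{FG}), squaring the first relation gives $\log(n/f_j(n)) = \pi/(\tau_j a_n^2)$, so
$$f_j(n) = n\exp\left(-\frac{\pi}{\tau_j a_n^2}\right),$$
and raising the second relation to the power $2k_n/n$ gives
$$g_j(n) = n\exp\left(-\pi V_{\mathfrak{p}_n}^{2/n}\tau_j p_n^{2k_n/n}\right).$$
Thus proving $f_j(n),g_j(n)=o(1)$ reduces to showing that the two exponents $\pi/(\tau_j a_n^2)-\log n$ and $\pi V_{\mathfrak{p}_n}^{2/n}\tau_j p_n^{2k_n/n}-\log n$ both tend to $+\infty$.

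For $g_j(n)$, the first term inside the $\max$ in (\ref{MAC}) yields $p_n^{2k_n/n} > \delta' n$, and, combined with $V_{\mathfrak{p}_n}^{2/n} > 2\pi e\sigma_w^2$, this gives $\pi V_{\mathfrak{p}_n}^{2/n}\tau_j p_n^{2k_n/n} > 2\pi^2 e \sigma_w^2 \tau_j \delta' n$. Since the right-hand side grows linearly in $n$, it dominates $\log n$ for every $j$, and hence $g_j(n)\to 0$.

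For $f_j(n)$, I would use the identity $a_n^2 = V_{\mathfrak{p}_n}^{2/n} p_n^{-2(n-k_n)/n}$ coming directly from $V_{\mathfrak{p}_n}=a_n^n p_n^{n-k_n}$. The second term inside the $\max$ in (\ref{MAC}) gives $p_n^{2(n-k_n)/n} > (\log n)/\pi$, which together with $V_{\mathfrak{p}_n}^{2/n}\leq b$ yields $a_n^2 < b\pi/\log n$. Therefore $\pi/(\tau_j a_n^2) > (\log n)/(b\tau_j)$, and since $b\tau_j\leq b\tau_1 < 1$ by the hypothesis $b<1/\tau_1$, we obtain
$$\frac{\pi}{\tau_j a_n^2} - \log n > \left(\frac{1}{b\tau_j}-1\right)\log n \longrightarrow \infty,$$
so $f_j(n)\to 0$.

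The computation is almost entirely algebraic once the defining relations are inverted; there is no real obstacle. The only subtlety worth flagging is the interplay between the upper bound $V_{\mathfrak{p}_n}^{2/n}\leq b$ and the restriction $b<1/\tau_1$: these two conditions together supply the \emph{strict} margin $1/(b\tau_j)-1>0$ that makes the exponent in $f_j(n)$ dominate $\log n$. The hypothesis $\delta'\geq 2/(\pi e)$ is not needed for this compatibility calculation; it is instead the condition that matches the threshold in Theorem \ref{PE} (via the behaviour of $a_n p_n$), while the condition $\limsup_n \tau_1 V_{\mathfrak{p}_n}^{2/n}<1$ required by Theorem \ref{FF} is automatically delivered by $V_{\mathfrak{p}_n}^{2/n}\leq b<1/\tau_1$.
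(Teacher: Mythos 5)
Your proof is correct and follows essentially the same route as the paper: invert the two relations in (\ref{FG}) to obtain $f_j(n)=n e^{-\pi/(\tau_j a_n^2)}$ and $g_j(n)=n e^{-\pi V_{\mathfrak{p}_n}^{2/n}\tau_j p_n^{2k_n/n}}$, then use the lower bound from (\ref{MAC}) together with $V_{\mathfrak{p}_n}^{2/n}>2\pi e\sigma_w^2$ to make the $g_j$ exponent grow linearly, and the other lower bound from (\ref{MAC}) together with $V_{\mathfrak{p}_n}^{2/n}\le b<1/\tau_1$ to bound $f_j(n)<n^{1-1/(b\tau_j)}\to 0$. Your closing remark that $\delta'\ge 2/(\pi e)$ plays no role in this particular verification, but is there only to align Theorem~\ref{CM} with the prime threshold of Theorem~\ref{PE}, is accurate and consistent with the paper's own (silent) use of the hypothesis.
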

 
 \begin{proof} Note that the double sequence (in $n$ and in $j$) $\tau_j V_{\mathfrak{p}_n}^{2/n}$ is bounded away from $0;$ indeed, $\inf_{n,j} \tau_j V_{\mathfrak{p}_n}^{2/n} \geq 2\pi e \sigma_w^2 \tau_{\ell}>0.$ 
 
 Since $ p_n >  \left( \delta' n \right)^{n/(2k_n)},$  $g_j(n) < ne^{-\tau_j V_{\mathfrak{p}_n}^{2/n} \delta' n}$ so $g_j(n)=o(1).$ 
 Moreover, since $ p_n >  \left( \frac{1}{\pi} \log n \right)^{n/(2(n-k_n))}$ and $\tau_j V_{\mathfrak{p}_n}^{2/n}< \tau_1 b,$ writing $a_n=V_{\mathfrak{p}_n}^{1/n}/p_n^{1-k_n/n}$ we see that
 $$
 f_j(n) = ne^{-\pi p_n^{2(1-k_n/n)}/(\tau_j V_n^{2/n})} <n^{1-1/(\tau_1 b)}
 $$
 so also $f_j(n)=o(1).$ 
 \end{proof}
  \begin{remark} Note that, if $
 p_n > \left( \frac{1}{\pi} \log n \right)^{n/(2(n-k_n))},$ then $\lim_{n\rightarrow \infty} p_n^{1-k_n/n}=\infty.$ 
 \end{remark}

 \subsection{Application to Lattice Gaussian Coding}
 
 In \cite{Ling_Belf}, Ling and Belfiore introduce lattice Gaussian coding, and elegantly use the flatness factor to prove that this coding scheme can achieve the capacity of the  AWGN channel. 
 
 Let $\sigma_s>0,$ $c\in \mathbb{R}^n$ and $\Lambda$ be a lattice in $\mathbb{R}^n.$ Define $f_{\sigma_s,c}:\mathbb{R}^n\longrightarrow (0,\infty)$ by $f_{\sigma_s,c}(y)=e^{-\|y-c\|^2/(2\sigma_s^2)}/(2\pi \sigma_s^2)^{n/2},$ and set $f_{\sigma_s,c}(\Lambda)= \sum_{\lambda \in \Lambda} f_{\sigma_s,c}(\lambda)$ for short. Then, a lattice Gaussian random variable (over $\Lambda$ with a shift vector $c$ and parameter $\sigma_s$) is defined via its probability mass function $D_{\Lambda,\sigma_s,c}: \Lambda \longrightarrow (0,1),$ given by $D_{\Lambda,\sigma_s,c}(\lambda)=f_{\sigma_s,c}(\lambda)/f_{\sigma_s,c}(\Lambda).$
 
 If a signal $X$ is drawn according to $D_{\Lambda,\sigma_s,c},$ with $\epsilon_{\Lambda}\left( \sigma_s \right)<1,$ is used in an AWGN channel $Y=X+Z,$ where the noise $Z$ has variance $\sigma_z^2,$ Ling and Belfiore show that the probability of error under MAP decoding $P_e^{\mathrm{LG}}(\Lambda,\sigma_s,c;\sigma_z)$ can be upper bounded as
\begin{align*}
P_e^{\mathrm{LG}}(\Lambda,\sigma_s,c;\sigma_z)\leq \frac{1+ \epsilon_{\Lambda}\left( \widetilde{\sigma} \right)}{1- \epsilon_{\Lambda}\left( \sigma_s \right)} \cdot \mathrm{Pr}((\widetilde{\sigma}/\sigma_s)Z\not\in \mathcal{V}(\Lambda)),
\end{align*}
where $\widetilde{\sigma} = \sigma_s^2/\sqrt{\sigma_s^2+\sigma_z^2}.$ In the remaining of the paper, we set $\sigma_w=(\widetilde{\sigma}/\sigma_s)\sigma_z.$

Further, with $P=\frac{1}{n} \mathbb{E}[\|X-c\|^2],$ the entropy $\mathbb{H}(X)$ satisfies
\begin{align*}
\mathbb{H}(X) &= \log \left( (2\pi \sigma_s^2)^{n/2}f_{\sigma_s,c}(\Lambda)\right) + \frac{n}{2}\cdot \frac{P}{\sigma_s^2} \\
&\geq \log \left( \frac{\left( 1-\epsilon_{\Lambda}\left(\sigma_s \right)\right)(2\pi \sigma_s^2)^{n/2}}{\mu_L(\mathcal{V}(\Lambda))} \right)+ \frac{n}{2}\cdot \frac{P}{\sigma_s^2} 
\end{align*}

So, with $\mu(\mathcal{V}(\Lambda))^{2/n}=2\pi e \sigma_w^2(1+\varepsilon)$ and $\varepsilon>0,$ the maximum achievable rate $R_{\max}^{\mathrm{LG}}(\Lambda,\sigma_s,c;\sigma_z)$ satisfies
\begin{align*}
R_{\max}^{\mathrm{LG}}&(\Lambda,\sigma_s,c;\sigma_z) \geq \frac{1}{n} \mathbb{H}(X)  \\
&\geq \frac{1}{2}\log \left( \frac{\left( 1-\epsilon_{\Lambda}\left( \sigma_s \right)\right)^{2/n}}{(1+\varepsilon)e^{1-P/\sigma_s^2}} \right) + \frac{1}{2}\log \left( 1+ \frac{\sigma_s^2}{\sigma_z^2} \right).
\end{align*}

Fix a $t\in (0,\pi).$ Suppose that, for each $n\in \mathbb{Z}_{>1},$ $\Lambda^{(n)}$ is a lattice in $\mathbb{R}^n$ such that  $\epsilon_{\Lambda^{(n)}}\left( \sigma_s/\sqrt{\frac{\pi}{\pi-t}}\right)<1,$ and $c_n\in \mathbb{R}^n$ is any shift vector. For each $n,$ let $X^{(n)}$ be a random variable distributed according to $D_{\Lambda^{(n)},\sigma_s,c_n},$ and set $P_n = \frac{1}{n}\mathbb{E}[\|X^{(n)}-c_n\|^2].$  Ling and Belfiore show that $\lim_{n\rightarrow \infty} \frac{P_n}{\sigma_s}=1.$ In such a case, with $\mathrm{SNR}_n=P_n/\sigma_z^2,$ one has that for any $\varepsilon' >\frac{1}{2}\log(1+\varepsilon),$
$$
R_{\max}^{\mathrm{LG}}(\Lambda^{(n)},\sigma_s,c;\sigma_z) \geq \frac{1}{2} \log (1+\mathrm{SNR}_n) - \varepsilon'
$$
if $n$ is large enough.

The following theorem quantifies the primes needed for Ling and Belfiore's construction.

 \begin{theorem}
  Let $\lbrace c_n \rbrace_{n\in \mathbb{Z}_{>1}}$ be any sequence of shift vectors $c_n\in \mathbb{R}^n,$ and $\lbrace p_n \rbrace_{n\in \mathbb{Z}_{>1}}$ be any sequence of primes such that $p_n > (\delta'n)^{\frac{1}{2}\left( 1+ \frac{\log \log n}{\log n}\right)}$ where $\delta'=2/(\pi e).$  Assume that $\sigma_s^2/\sigma_z^2>e.$ Then, for any $\eta \in (0,\frac{1}{2}\log \sigma_s^2/(e\sigma_z^2))$ and any $\gamma\in(2\pi e,2\pi e^{1+2\eta}],$ there is a sequence of quadruples of parameters $\lbrace \mathfrak{p}_n=(n,k_n,p_n,a_n) \rbrace_{n\in \mathbb{Z}_{>1}}$ and a function $h,$ which satisfies  $h(n)=o(1)$ as $n\longrightarrow \infty,$ such that $\gamma_{\Lambda_{\mathfrak{p}_n}}(\sigma_w) =\gamma$ for every $n$ and as $n\longrightarrow \infty$
  \begin{align*}
  \mathrm{Pr}&\left\lbrace P_e^{\mathrm{LG}}(\Lambda_{\mathfrak{p}_n},\sigma_s,c_n;\sigma_z) \leq e^{-n(E_P^{\mathrm{un}}(\gamma/(2\pi e))+h(n))} \right., \\
 &\left. R_{\max}^{\mathrm{LG}}(\Lambda_{\mathfrak{p}_n},\sigma_s,c_n;\sigma_z)>\frac{1}{2}\log \left( 1+ \mathrm{SNR}_n \right) - \eta \right\rbrace \longrightarrow 1
  \end{align*}
  \end{theorem}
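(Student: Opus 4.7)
The plan is to construct the quadruples $\mathfrak{p}_n=(n,k_n,p_n,a_n)$ so that the hypotheses of Theorems \ref{PE}, \ref{FF}, and \ref{CM} are simultaneously satisfied, apply those theorems to obtain bounds in expectation on the error probability and the flatness factor, and then convert these into the claimed high-probability statement by Markov's inequality, a union bound, and the Ling--Belfiore framework. Set $\varepsilon:=\gamma/(2\pi e)-1>0$, a constant in $n$. I fix $a_n$ from the constraint $V_{\mathfrak{p}_n}^{2/n}=\gamma\sigma_w^2$, i.e.\ $a_n=(\gamma\sigma_w^2)^{1/2}/p_n^{(n-k_n)/n}$, so that $\gamma_{\Lambda_{\mathfrak{p}_n}}(\sigma_w)=\gamma$ holds deterministically, and I choose $k_n:=\lceil n\log n/(\log n+\log\log n)\rceil$. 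This choice makes $n/(2k_n)\le \tfrac{1}{2}(1+\log\log n/\log n)$, so the hypothesis on $p_n$ immediately yields $p_n>(\delta' n)^{n/(2k_n)}$; and since $n-k_n\sim n\log\log n/\log n$ while $\log p_n\sim\tfrac{1}{2}\log n$, a short calculation also gives $p_n>((\log n)/\pi)^{n/(2(n-k_n))}$ for $n$ large, which via the remark after Theorem \ref{CM} forces $p_n^{1-k_n/n}\to\infty$ (and hence $p_n^{n-k_n}\to\infty$ as well).

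Next, for the flatness I apply Theorem \ref{FF} at three scales: $\tau_1=1/(2\pi\widetilde\sigma^2)$ (to bound the numerator $1+\epsilon_\Lambda(\widetilde\sigma)$ in Ling--Belfiore's error inequality), $\tau_2=1/(2\pi\sigma_s^2)$ (for the denominator $1-\epsilon_\Lambda(\sigma_s)$ and the $(1-\epsilon)^{2/n}$ factor appearing in the rate bound), and $\tau_0=1/(2(\pi-t)\sigma_s^2)$ for a small $t\in(0,\pi/(1+\sigma_s^2/\sigma_z^2))$ (so that Ling--Belfiore's limit $P_n/\sigma_s^2\to 1$ applies on the good event). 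Each satisfies $\tau_j V_{\mathfrak{p}_n}^{2/n}<1$; the tightest case is $\tau_1 V_{\mathfrak{p}_n}^{2/n}=\gamma\sigma_z^2/(2\pi\sigma_s^2)<1$, which follows from $\gamma\le 2\pi e^{1+2\eta}$ together with the hypothesis $\eta<\tfrac{1}{2}\log(\sigma_s^2/(e\sigma_z^2))$ forcing $\gamma<2\pi\sigma_s^2/\sigma_z^2$. Theorem \ref{CM} then feeds Theorem \ref{FF}, giving $B^{\mathrm{Fl}}_{\mathfrak{p}_n}(W^{(n)},\tau_j)\to 0$ for $j\in\{0,1,2\}$. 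Simultaneously, Theorem \ref{PE} applied with $\delta=E_T(1+\varepsilon)$ (for which $2\delta/(\pi e(1+\varepsilon))\le\delta'$) produces $B^{\mathrm{NN}}_{\mathfrak{p}_n}(W^{(n)})\le e^{-n(E_P^{\mathrm{un}}(\gamma/(2\pi e))+g_n)}$ for some $g_n=o(1)$.

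To pass from these expectation bounds to a high-probability statement over $\Lambda_{\mathfrak{p}_n}$, set $\alpha_n:=\sqrt{\max_{j}B^{\mathrm{Fl}}_{\mathfrak{p}_n}(W^{(n)},\tau_j)}=o(1)$ and define the bad events $E_0,E_1,E_2$ stating that $\epsilon_{\Lambda_{\mathfrak{p}_n}}$ exceeds $\alpha_n$ at the respective scale, together with $E_3:=\{\mathrm{Pr}(W^{(n)}\notin\mathcal{V}(\Lambda_{\mathfrak{p}_n}))>e^{-n(E_P^{\mathrm{un}}(\gamma/(2\pi e))+g_n-(\log n)/n)}\}$. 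Markov's inequality bounds each of $\mathrm{Pr}(E_0),\mathrm{Pr}(E_1),\mathrm{Pr}(E_2)$ by $\alpha_n$ and $\mathrm{Pr}(E_3)$ by $1/n$, so $\mathrm{Pr}(E_0\cup\cdots\cup E_3)=o(1)$. On the complement, Ling--Belfiore's error bound gives $P_e^{\mathrm{LG}}\le\tfrac{1+\alpha_n}{1-\alpha_n}\cdot e^{-n(E_P^{\mathrm{un}}(\gamma/(2\pi e))+g_n-(\log n)/n)}$, yielding the claimed exponential bound with $h(n):=g_n-(\log n)/n-\tfrac{1}{n}\log\tfrac{1+\alpha_n}{1-\alpha_n}=o(1)$; and the displayed rate inequality preceding the theorem, combined with $(1-\alpha_n)^{2/n}\to 1$, $P_n/\sigma_s^2\to 1$ (from $\epsilon_{\Lambda_{\mathfrak{p}_n}}(\sigma_s\sqrt{(\pi-t)/\pi})\le\alpha_n$), $1+\varepsilon\le e^{2\eta}$, and $\mathrm{SNR}_n\to\sigma_s^2/\sigma_z^2$, gives $R_{\max}^{\mathrm{LG}}>\tfrac{1}{2}\log(1+\mathrm{SNR}_n)-\eta$ for $n$ large. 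The main obstacle is the simultaneous $o(1)$ bookkeeping: a single function $h$ must absorb the Markov slack $(\log n)/n$ from $E_3$, the flatness gap $\alpha_n$, and the residual $P_n/\sigma_s^2-1$ in the rate; moreover, the single choice $k_n\sim n\log n/(\log n+\log\log n)$ must meet both prime lower bounds of Theorem \ref{CM}, which pushes the given prime-growth hypothesis to its edge and is precisely where the specific exponent $\tfrac{1}{2}(1+\log\log n/\log n)$ is consumed.
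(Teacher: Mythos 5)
Your proof is correct and takes essentially the same route as the paper's: the paper also fixes $k_n$ near $n\log n/\log(n\log n)$, defines the same three scales $1/(2\pi\widetilde\sigma^2)$, $1/(2(\pi-t)\sigma_s^2)$, $1/(2\pi\sigma_s^2)$, verifies the hypotheses of Theorems \ref{CM}, \ref{PE}, \ref{FF}, and invokes Markov's inequality. The only difference is that the paper compresses the verification of the prime-growth exponents and the final union bound over the bad events into one sentence, whereas you spell these out explicitly; that bookkeeping (including the observation that the given exponent $\frac{1}{2}(1+\log\log n/\log n)$ is exactly what makes both branches of inequality \ref{MAC} hold for a single choice of $k_n$) is correct.
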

  \begin{proof}
  First, note that for each $n,$ there is a $k_n\in [n\log n /\log(n\log n),n-1]$ making $p_n$ satisfy inequality \ref{MAC} with $\delta'=2/(\pi e).$ Set $\tau_1=1/(2\pi \widetilde{\sigma}^2),$ $\tau_2=1/(2(\pi-t)\sigma_s^2)$ and $\tau_3 = 1/(2\pi\sigma_s^2),$ where $t$ is small enough so that $\tau_1>\tau_2>\tau_3>0.$ Note that $2\pi e \sigma_{w}^2 \tau_1 <1$ is equivalent to $\sigma_s^2/\sigma_z^2>e.$ For each $n,$ choose $a_n$ so that $\gamma_{\Lambda_{\mathfrak{p}_n}}(\sigma_w) =\gamma.$ Then, $\limsup_{n\rightarrow \infty} \tau_1 V_{\mathfrak{p}_n}^{2/n} \leq e^{1+2\eta}\sigma_{z}^2/\sigma_s^2<1.$ Then, theorem \ref{CM} yields that theorems \ref{PE} and \ref{FF} apply, and, in view of propositions \ref{PE1} and \ref{FF1}, Markov's inequality yields the desired result.
  \end{proof}

  \section*{Appendix A}
  
   We collect here some of the technical issues regarding measure theory.
   In this paper, we endow any finite set $T$ with the $\sigma-$algebra $\mathcal{P}(T),$ and a random variable over $T$ always refers to a $T-$valued measurable function.
   
   \subsubsection{} \label{AC} Consider any sets $T_1$ and $T_2,$ where $T_1$ is finite, any random variable $G$ over $T_1$ and any function $f:T_1 \longrightarrow T_2.$ Replacing $T_2$ by the range of $g,$ which is necessarily a finite set, it is clear that $f$ is measurable. Hence, $f(G)$ is a well-defined random variable. In this paper, whenever we consider the composition of random variables over finite sets with another function, we are assuming that a similar construction to the one discussed here is made. 
   
   \subsubsection{} \label{AT} Tonelli's theorem assures that the various interchanges of integrals made in this paper are justified. 
   
   \begin{theorem}[Tonelli] Let $(X,\Sigma_1,\mu)$ and $(Y,\Sigma_2,\nu)$ be $\sigma-$finite measure spaces, and $f:X\times Y \longrightarrow [0,\infty]$ be measurable. Then,
\begin{align*}
\int_X \int_Y f(x,y) \, d\nu \, d\mu &= \int_Y \int_X f(x,y) \, d\mu \, d\nu  \\
&= \int_{X\times Y} f(x,y) \, d\mu\times\nu.
\end{align*}
\end{theorem}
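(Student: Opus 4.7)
The plan is to prove Tonelli's theorem in the standard textbook way, by building up from indicator functions through simple functions to general nonnegative measurable functions, with the monotone convergence theorem doing the heavy lifting at the last step. The $\sigma$-finiteness hypothesis is used to extend from rectangles of finite measure to the whole product space.

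First I would verify the two identities for characteristic functions, i.e., establish that for every $E\in\Sigma_1\otimes\Sigma_2$ the sections $E_x:=\{y\in Y:(x,y)\in E\}$ and $E^y:=\{x\in X:(x,y)\in E\}$ are measurable in their respective $\sigma$-algebras, that $x\mapsto \nu(E_x)$ and $y\mapsto \mu(E^y)$ are measurable, and that
\begin{equation*}
\int_X \nu(E_x)\,d\mu = (\mu\times\nu)(E) = \int_Y \mu(E^y)\,d\nu.
\end{equation*}
For measurable rectangles $E=A\times B$ with $\mu(A),\nu(B)<\infty$, both sides equal $\mu(A)\nu(B)$ by direct computation. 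To extend to arbitrary $E$ in the product $\sigma$-algebra, I would apply the monotone class theorem (equivalently, the $\pi$-$\lambda$ argument): the collection of sets $E$ for which the identity holds contains the $\pi$-system of finite-measure rectangles and, using dominated/monotone convergence on the sections, is closed under countable disjoint unions and proper differences within a set of finite product measure. The $\sigma$-finiteness of $\mu$ and $\nu$ lets me write $X\times Y$ as a countable increasing union of such finite-measure rectangles and assemble the general case by monotone convergence.

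Next I would extend to nonnegative simple functions by linearity: if $s=\sum_{i=1}^N c_i \mathbf{1}_{E_i}$ with $c_i\geq 0$ and $E_i\in\Sigma_1\otimes\Sigma_2$, then the iterated and product integrals of $s$ are finite nonnegative linear combinations of those of the $\mathbf{1}_{E_i}$, so both iterated integrals agree with $\int_{X\times Y} s\,d(\mu\times\nu)$. For a general measurable $f:X\times Y\to[0,\infty]$, I would pick an increasing sequence of nonnegative simple functions $s_n\uparrow f$ pointwise (the standard dyadic construction). For each fixed $x$, the sections $s_n(x,\cdot)\uparrow f(x,\cdot)$ are nonnegative measurable on $Y$, so by the monotone convergence theorem applied on $(Y,\Sigma_2,\nu)$, the inner integrals $\int_Y s_n(x,y)\,d\nu$ increase pointwise in $x$ to $\int_Y f(x,y)\,d\nu$, and this limit function is measurable in $x$ as a pointwise monotone limit of measurable functions. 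A second application of monotone convergence, this time on $(X,\Sigma_1,\mu)$, together with the simple-function case, gives
\begin{equation*}
\int_X\!\!\int_Y f\,d\nu\,d\mu = \lim_{n\to\infty}\int_{X\times Y} s_n\,d(\mu\times\nu) = \int_{X\times Y} f\,d(\mu\times\nu),
\end{equation*}
and the symmetric argument with the roles of $X$ and $Y$ swapped yields the other equality.

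The main obstacle is Step 1, the set-theoretic extension from measurable rectangles to all sets in the product $\sigma$-algebra. One must check measurability of sections and of the section-measure functions $x\mapsto\nu(E_x)$, $y\mapsto\mu(E^y)$, and justify the closure properties of the monotone class under differences, which is where $\sigma$-finiteness enters in an essential way (to avoid $\infty-\infty$). Once that bookkeeping is done, the passage from indicators to simple functions is purely algebraic, and the passage from simple functions to general nonnegative measurables is a routine double application of the monotone convergence theorem.
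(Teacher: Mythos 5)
The paper states Tonelli's theorem in Appendix~A as a classical black-box result and gives no proof of it; it is cited only to justify interchanging sums and integrals elsewhere. So there is nothing in the paper to compare your argument against line by line. That said, your proposal is the standard and correct textbook proof: establish the identity for indicator functions of product-measurable sets via the monotone class (or $\pi$-$\lambda$) theorem, using $\sigma$-finiteness to reduce to rectangles of finite measure and to avoid $\infty-\infty$ when closing under proper differences; extend by linearity to nonnegative simple functions; and pass to general nonnegative measurable $f$ by a double application of the monotone convergence theorem to an increasing sequence of simple approximants. You correctly flag the real technical content — measurability of sections $E_x$, $E^y$ and of the functions $x\mapsto\nu(E_x)$, $y\mapsto\mu(E^y)$, and the role of $\sigma$-finiteness in the set-theoretic step — which is where most of the work lies. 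One small remark: since the paper only invokes Tonelli in the special case where one or both measures are counting measures (to interchange sums and integrals, or to reorder iterated sums of nonnegative terms), a much shorter ad hoc argument would suffice for the paper's purposes; but your general proof is correct and, of course, covers those cases.
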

\begin{remark}
When $\nu$ is the counting measure, the theorem yields that $\int_X \sum_{y\in Y} f(x,y) \, d\mu = \sum_{y\in Y} \int_X  f(x,y) \, d\mu.$ If $\mu$ is also the counting measure, then the theorem yields that $\sum_{x\in X}  \sum_{y\in Y}  f(x,y) = \sum_{y\in Y}\sum_{x\in X}    f(x,y).$ Also, an extension yields that, when $X = \prod_{i=1}^n X_i$ is countable and $f:X\longrightarrow [0,\infty]$ is any function, $\sum_{i=\pi(1)}^{\pi(n)} \sum_{x_i\in X_i} f(x_1,\cdots,x_n) = \sum_{x\in X} f(x)$ for any permutation $\pi$ in the symmetric group $S_n.$ 
\end{remark}

  \subsubsection{} \label{AI}
 Denote the $n-$sphere by $\mathbb{S}^{n-1}.$ For a fixed $M\in \mathbb{F}_p^{n\times k},$ discreteness of $a\Lambda(M)$ implies that $|\mathcal{B}(0,2r)\cap a\Lambda(M)|<\infty.$ Thus, in particular, $\max_{w\in \mathbb{S}^{n-1}} \lbrace N_{\mathcal{B}(w,r)}(a\Lambda(M)\setminus ap\mathbb{Z}^n) \rbrace$ exists and is finite. Denote this maximum by $\ell.$ Let $f_M:r\mathbb{S}^{n-1} \longrightarrow \lbrace 0,\cdots,\ell \rbrace$ be defined by $f_M(w)=N_{\mathcal{B}(w,r)}(a\Lambda(M)\setminus ap\mathbb{Z}^n).$ By discreteness of $a\Lambda(M),$ $f_M^{-1}(\lbrace j \rbrace),$ for any $0\leq j \leq \ell,$ is a countable union of closed subsets of $r\mathbb{S}^{n-1}.$ In particular, each $f_M^{-1}(\lbrace j \rbrace)$ is measurable. Thus, for any Borel-measurable set $B\subset \mathbb{R}_{>0},$ the set $f_M^{-1}(B)=f^{-1}(B\cap \lbrace 0 ,\cdots,\ell \rbrace)= \bigcup_{j\in B_\ell} f_M^{-1}(\lbrace j \rbrace),$ where $B_\ell = B \cap \lbrace 0 ,\cdots,\ell \rbrace,$ is measurable. Hence, $f_M$ is a well-defined random variable, and $\mathbb{E}_{W^{(n)}}[f_M(W^{(n)})]$ is well-defined. Further, as, for any random variable $G$ over $\mathbb{F}_p^{n\times k},$ $\mathbb{E}_G[f_G(W^{(n)})]=\sum_{M\in \mathbb{F}_p^{n\times k}} \mathrm{Pr}(G=M) f_M(W^{(n)})$ is a finite sum, $\mathbb{E}_{W^{(n)}}[\mathbb{E}_G[f_G(W^{(n)})]]$ is also well-defined, and, by non-negativity of each $f_M(W^{(n)}),$ we may interchange the order of expectations.
 
  \section*{Appendix B} \label{B}

\noindent\textit{Proof (of Proposition \ref{av}).}
By lemma \ref{ineq}, we have that
\begin{align*}
  &\mathbb{E}_{G}\left[ \sum_{\lambda \in\Lambda(G)} g(\lambda) \right] = \sum_{M\in \mathbb{F}_p^{n\times k}} \mathrm{Pr}(G=M)\sum_{\lambda \in \Lambda(M)}  g(\lambda) \\
  &\leq  p^{k}  \sum_{y\in \mathbb{F}_{p}^{n}} \mathrm{Pr}(Gu_{\mathfrak{p}}=y) \sum_{z\in \mathbb{Z}^n} g(y+pz)\\
  &= p^k \mathbb{E}_{Gu_\mathfrak{p}}\left[ \sum_{z\in \mathbb{Z}^n} g(Gu_\mathfrak{p}+pz) \right],
\end{align*}
as desired \hfill $\Box$ \\

\noindent \textit{Proof (of Proposition \ref{Av1}).} Using  $g: \mathbb{R}^n \longrightarrow [0,\infty]$ defined by $g(\lambda)=N_S(\lbrace a\lambda\rbrace \setminus ap\mathbb{Z}^n)$ in proposition \ref{av}, one obtains 

\begin{align*}
   &\mathbb{E}_{G}\left[ N_{S}(a\Lambda(G)\setminus ap\mathbb{Z}^n) \right] \\
   &\leq p^k \mathbb{E}_{Gu_\mathfrak{p}}\left[ \sum_{z\in \mathbb{Z}^n} N_S(\lbrace a(Gu_\mathfrak{p}+pz) \rbrace\setminus ap\mathbb{Z}^n) \right] \\
   &= p^{k}  \sum_{y\in \mathbb{F}_{p}^{n}\setminus \lbrace 0 \rbrace} \mathrm{Pr}(Gu_{\mathfrak{p}}=y)\sum_{z\in \mathbb{Z}^n}  N_{S}(\lbrace a (y+pz) \rbrace) \\
   &\leq p^k \cdot \xi^M(G)  \sum_{y\in \mathbb{F}_{p}^n\setminus \lbrace 0 \rbrace}  N_{S}( a(y+p\mathbb{Z}^n ) ) \\
   &\leq  p^k \cdot \xi^M(G) \cdot  N_{S}(a\mathbb{Z}^n ).
   \end{align*}
Inequality \ref{EUNS} follows from \ref{EZNS} by substituting $G=U_\mathfrak{p}',$ using lemma \ref{U} and noting that $\xi^M(U_\mathfrak{p}')=\frac{1-\xi_\mathfrak{p}}{p^n-1}.$ \hfill $\Box$ \\

\noindent \textit{Proof (of Lemma \ref{NS}).} For the first statement, note that if $apx\in \mathcal{B}_n(q,r)$ for some $q\in \mathbb{R}^n$ and nonzero $x\in \mathbb{Z}^n,$ then $ap-\|q\|\leq \|apx\|-\|q\|\leq \|apx-q\|\leq r,$ so $ap-r\leq \|q\|.$ The second one follows from Markov's inequality. \hfill $\Box$

  \section*{Appendix C}\label{C}
  
  Before proving lemma \ref{IP}, observe that, since  $1/p^k\leq \xi_\mathfrak{p}$ and $k\leq n-1,$ we have 
  \begin{align}
  &(1-\xi_\mathfrak{p})p^n/((1-p^{k-n})(p^n-1)) \nonumber \\
  &\leq p^2(p^{n-1}-1)/((p-1)(p^n-1)) <p/(p-1)\leq 2 \label{2}
   \end{align}
Also, recall the following well-known result.

\begin{theorem} \label{VUB}
 For each $n\in \mathbb{N},$ we have that $\mu_L(\mathcal{B}_n(0,1)) \leq \frac{1}{\sqrt{n\pi}} \left( \frac{2\pi e}{n} \right)^{n/2}.$ Furthermore, as $n\longrightarrow \infty,$  $\mu_L(\mathcal{B}_n(0,1)) \sim \frac{1}{\sqrt{n\pi}} \left( \frac{2\pi e}{n} \right)^{n/2}.$   
 \end{theorem}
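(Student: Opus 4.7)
The plan is to reduce the claim to the standard volume formula $V_n := \mu_L(\mathcal{B}_n(0,1)) = \pi^{n/2}/\Gamma(n/2+1)$ and then apply a one-sided Stirling bound.

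First I would derive the closed form by the Gaussian trick. Compute $I_n := \int_{\mathbb{R}^n} e^{-\pi\|x\|^2}\,dx$ in two ways: separation of variables together with the identity $\int_{\mathbb{R}} e^{-\pi t^2}\,dt = 1$ gives $I_n = 1$; polar integration $I_n = n V_n \int_0^\infty r^{n-1} e^{-\pi r^2}\,dr$ together with the substitution $u = \pi r^2$ gives $I_n = V_n \Gamma(n/2+1)/\pi^{n/2}$. Equating yields
\begin{equation*}
V_n \;=\; \frac{\pi^{n/2}}{\Gamma(n/2+1)}.
\end{equation*}

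Next I would invoke the Stirling estimate with explicit remainder. The asymptotic expansion of $\ln \Gamma$ can be written, for every $x>0$, as
\begin{equation*}
\Gamma(x+1) \;=\; \sqrt{2\pi x}\,\left(\frac{x}{e}\right)^{x} e^{R(x)}, \qquad 0 < R(x) < \frac{1}{12 x},
\end{equation*}
a standard form of Stirling's series in which the remainder after the leading term is known to be strictly positive and bounded by $1/(12x)$ (this is the content of Robbins' inequality on the half-line, which extends to all $x>0$ via the duplication formula). Setting $x = n/2$ gives the one-sided bound $\Gamma(n/2+1) \geq \sqrt{\pi n}\,(n/(2e))^{n/2}$, and substituting into the volume formula yields
\begin{equation*}
V_n \;\leq\; \frac{\pi^{n/2}}{\sqrt{\pi n}\,(n/(2e))^{n/2}} \;=\; \frac{1}{\sqrt{n\pi}}\left(\frac{2\pi e}{n}\right)^{n/2},
\end{equation*}
which is the stated upper bound. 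For the asymptotic statement, the same Stirling expansion shows $R(n/2) \to 0$, so $\Gamma(n/2+1) \sim \sqrt{\pi n}\,(n/(2e))^{n/2}$, from which $V_n \sim (n\pi)^{-1/2}(2\pi e/n)^{n/2}$ follows immediately.

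The entire argument is standard; the only delicate point is invoking a form of Stirling that provides a \emph{uniform} lower bound on $\Gamma(x+1)$ for all $x>0$ (so that the inequality holds for every $n\in\mathbb{N}$, not merely for $n$ large). I would ensure this by citing Robbins' proof, which gives $n! \geq \sqrt{2\pi n}(n/e)^n$ for integer $n$ directly, and then handling the half-integer case $\Gamma((2m+1)/2)$ via the Legendre duplication formula $\Gamma(x)\Gamma(x+1/2) = 2^{1-2x}\sqrt{\pi}\,\Gamma(2x)$ applied with $x = m+1/2$. This splits the verification into even and odd $n$ but keeps the constants explicit; alternatively, one simply cites the Binet/Whittaker–Watson form of the remainder in Stirling's series, which is known to be positive on $(0,\infty)$.
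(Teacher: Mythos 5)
The paper states this result as ``a well-known fact'' and supplies no proof, so there is nothing in the paper to compare your argument against. Your proof is correct: the Gaussian-integral derivation of $V_n = \pi^{n/2}/\Gamma(n/2+1)$, the uniform Stirling lower bound $\Gamma(x+1)\geq\sqrt{2\pi x}\,(x/e)^x$ valid for all $x>0$ (positivity of the Binet remainder), and the observation that the remainder $R(n/2)\to 0$ together give both the non-asymptotic upper bound and the asymptotic equivalence. You correctly flag the one subtle point, namely that the inequality must hold for every $n$ and not just large $n$, and the cleanest way to settle it is exactly the route you name second: cite the Binet/Whittaker--Watson integral representation, which shows the remainder is strictly positive on all of $(0,\infty)$ and hence covers the half-integer arguments $x=n/2$ directly. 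The first route you sketch (Robbins for integers plus Legendre duplication for half-integers) would also work but needs a short extra computation to confirm that the resulting bound for $\Gamma(m+1/2)$ dominates $\sqrt{2\pi(m-1/2)}\bigl((m-1/2)/e\bigr)^{m-1/2}$, which you do not carry out; since you offer the Binet form as an alternative, the proof as a whole is sound.
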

  
\noindent \textit{Proof (of Lemma \ref{IP}).} First, inequalities \ref{EUNS} and \ref{2}, lemma \ref{OR} and theorem \ref{VUB} yield that, for any $\rho>0,$
 \begin{equation}
 h(\mathfrak{p},U_{\mathfrak{p}}, \rho) \leq \frac{2}{\sqrt{\pi n}} \left( \frac{r\sqrt{2\pi e}}{\sqrt{n}V_{\mathfrak{p}}^{1/n}} + \frac{\sqrt{\pi e/2}}{p^{1-k/n}} \right)^n \label{ineq2}
 \end{equation}
 
Then,
    \begin{align*}
    &I_{\mathfrak{p}}\left(W^{(n)},\Lambda_{\mathfrak{p}}\right)\leq \int_0^\infty f_{\|W^{(n)}\|}(r) \\
    &\cdot \min\left( \frac{2}{\sqrt{\pi n}} \left( \frac{r\sqrt{2\pi e}}{\sqrt{n}V_{\mathfrak{p}}^{1/n}} + \frac{\sqrt{\pi e/2}}{p^{1-k/n}} \right)^n,1 \right) \, dr  \\
     &< \int_0^{\sqrt{n\sigma_{w,n}^2(1+\varepsilon)}} f_{\|W^{(n)}\|}(r) \\
     &\cdot \left( \frac{r}{\sqrt{n\sigma_{w,n}^2(1+\varepsilon)}} + \frac{\sqrt{\pi e/2}}{p^{1-k/n}} \right)^n \, dr \\
     &+ \mathrm{Pr}\left(\|W^{(n)}\|>\sqrt{n\sigma_{w,n}^2(1+\varepsilon)}\right).
     \end{align*}

Now, integration by parts yields that, since $\frac{\partial}{\partial r} \left( - \mathrm{Pr}(\|W^{(n)}\|>r) \right) = f_{\|W^{(n)}\|}(r,n),$ 
     \begin{align*}
     &\int_0^{\sqrt{n\sigma_{w,n}^2(1+\varepsilon)}} f_{\|W^{(n)}\|}(r) \left( \frac{r}{\sqrt{n\sigma_{w,n}^2(1+\varepsilon)}} + \frac{\sqrt{\pi e/2}}{p^{1-k/n}} \right)^n \, dr \\
     &= - \mathrm{Pr}\left(\|W^{(n)}\|>\sqrt{n\sigma_{w,n}^2(1+\varepsilon)} \right)\left( 1+ \frac{\sqrt{\pi e/2}}{p^{1-k/n}} \right)^n\\
     &+\left(\frac{\sqrt{\pi e/2}}{p^{1-k/n}} \right)^n +  \sqrt{\frac{n}{\sigma_{w,n}^2 (1+\varepsilon)}} J_{0,\sqrt{n\sigma_{w,n}^2(1+\varepsilon)}},
     \end{align*}
     where
    \begin{align*}
  J_{\alpha,\alpha'}:=    \int_{\alpha}^{\alpha'} &\mathrm{Pr}(\|W^{(n)}\|>r) \\
     &\cdot \left( \frac{r}{\sqrt{n\sigma_{w,n}^2(1+\varepsilon)}} + \frac{\sqrt{\pi e/2}}{p^{1-k/n}} \right)^{n-1} \, dr
    \end{align*}
     Now, note that 
      \begin{align*}
    J_{0,\sqrt{n\sigma_{w,n}^2}}\leq \sqrt{n\sigma_{w,n}^2} \left( \frac{1}{\sqrt{1+\varepsilon}} + \frac{\sqrt{\pi e/2}}{p^{1-k/n}} \right)^{n-1},
     \end{align*}
     and, by lemma \ref{CB},
      \begin{align*}
    J_{\sqrt{n\sigma_{w,n}^2},\sqrt{n\sigma_{w,n}^2(1+\varepsilon)}} &\leq \int_{\sqrt{n\sigma_{w,n}^2}}^{\sqrt{n\sigma_{w,n}^2(1+\varepsilon)}} e^{-nE\left( \frac{r^2}{n\sigma_{w,n}^2} \right)} \\
    &\cdot \left( \frac{r}{\sqrt{n\sigma_{w,n}^2(1+\varepsilon)}} + \frac{\sqrt{\pi e/2}}{p^{1-k/n}} \right)^{n-1} \, dr \\
       &= \int_{\sqrt{n\sigma_{w,n}^2}}^{\sqrt{n\sigma_{w,n}^2(1+\varepsilon)}} e^{-n\cdot v_{\mathfrak{p}}^{\mathrm{NN}}(r,\varepsilon)} \, dr \\
    &\leq  \sqrt{n\sigma_{w,n}^2(1+\varepsilon)} e^{-n \cdot \inf_{u\in C}v_{\mathfrak{p}}^{\mathrm{NN}}(u,\varepsilon)}.
    \end{align*}
    Hence, we have that
     \begin{align*}
    &I_{\mathfrak{p}}\left(W^{(n)},\Lambda_{\mathfrak{p}}\right)< \left(\frac{\sqrt{\pi e/2}}{p^{1-k/n}} \right)^n \\
    &+n\left( \left( \frac{1}{\sqrt{1+\varepsilon}}+ \frac{\sqrt{\pi e/2}}{p^{1-k/n}} \right)^{n} +e^{-n \cdot \inf_{u\in C}v_{\mathfrak{p}}^{\mathrm{NN}}(u,\varepsilon)}\right).
    \end{align*}
\hfill $\Box$

   Before proving lemma \ref{VNN}, note that the function  $v_{\mathfrak{p}}^{\mathrm{NN}}(\cdot,\varepsilon)$ is strictly convex over $[0,\infty).$ Indeed, for any $u\in \mathbb{R}_{\geq 0},$
    $$
    \frac{\partial^2}{\partial u^2} v_{\mathfrak{p}}^{\mathrm{NN}}(u,b) = \frac{1}{n\sigma_{w,n}^2} + \frac{1}{u^2} + \frac{(n-1)/n}{\left( u + \frac{\sqrt{n\sigma_{w,n}^2\pi e/2}}{p_n^{1-k_n/n}}\right)^2}>0.
    $$
    In particular, $v_{\mathfrak{p}}^{\mathrm{NN}}(\cdot,b)$ has a unique minimum over any bounded closed subinterval of $[0,\infty).$ \\

\noindent \textit{Proof (of Lemma \ref{VNN}).}
    For each $n,$  define $f_{n,b}:\mathbb{R}_{\geq 0}\longrightarrow \mathbb{R}$ by 
    \begin{align*}
    f_{n,b}(y)&=\left.\frac{\partial}{\partial u}v_{\mathfrak{p}_n}^{\mathrm{NN}}\left( u,b\right)\right|_{u=y} \\
    &= \frac{y}{n\sigma_{w,n}^2} -\frac{1}{y} - \frac{(n-1)/n}{ y + \frac{\sqrt{\pi en\sigma_{w,n}^2(1+b)/2}}{p_n^{1-k_n/n}} }
    \end{align*}
    and denote $C_n:=\left[ \sqrt{n\sigma_{w,n}^2},\sqrt{n\sigma_{w,n}^2(1+b)}\right]$ and $u_n:= \mathrm{argmin}_{u\in C_n} v_{\mathfrak{p}_n}^{\mathrm{NN}}\left( u,b\right).$ Note that, for each $n,$
    \begin{align}
       f_{n,b}\left(\sqrt{n\sigma_{w,n}^2(1+b)}\right)&=\frac{1}{\sqrt{n\sigma_{w,n}^2(1+b)}}\left( b - \frac{(n-1)/n}{1 + \frac{\sqrt{\pi e/2}}{p_n^{1-k_n/n}} }\right) \label{V2} \\
    f_{n,b}\left( \sqrt{2n\sigma_{w,n}^2}\right)&= \frac{1}{\sqrt{2n\sigma_{w,n}^2}} \left( 1 - \frac{(n-1)/n}{1+ \frac{\sqrt{\pi e (1+b)}}{2p_n^{1-k_n/n}}} \right)>0 \label{V3}
    \end{align}
    If $b<1,$ then equation $\ref{V2}$ implies that $u_n= \sqrt{n\sigma_{w,n}^2(1+b)}$ for all large $n.$ As 
    $$
    v_{\mathfrak{p}_n}^{\mathrm{NN}}\left( \sqrt{n\sigma_{w,n}^2(1+b)},b \right) =E_{\mathrm{sp}}(1+b) - \frac{n-1}{n} \log \left( 1+ \frac{\sqrt{\pi e/2}}{p_n^{1-k_n/n}} \right)
    $$
    we see that $\inf_{u\in C_n} v_{\mathfrak{p}_n}^{\mathrm{NN}}(u,b) = E_{P}^{\mathrm{un}}(1+b)+o(1)$ in this case.
    
    Now, assume that $b\geq 1.$ Then, for each $n,$ $\sqrt{2n\sigma_{w,n}^2}\in C_n,$ so inequality \ref{V3} implies that $\delta_n:=\frac{u_n}{\sqrt{2n\sigma_{w,n}^2}}<1.$ On the other hand, the sequence $\lbrace \alpha_n:=1- 1/\min(n,1+p_n^{1-k_n/n}/\sqrt{\pi e(1+b)})\rbrace_{n\in \mathbb{Z}_{>1}}\subset (0,1)$ satisfies $\lim_{n\rightarrow \infty}\alpha_n=1$ and $\alpha_n<\delta_n$ for all large $n$; indeed, for all large $n,$ we have that $2\alpha_n>1,$ so
    \begin{align*}
    &\alpha_n \sqrt{2n\sigma_{w,n}^2} f_{n,b}\left(\alpha_n\sqrt{2n\sigma_{w,n}^2}\right)\\
    &= 2\alpha_n^2-1-\frac{(n-1)/n}{1+\frac{\sqrt{\pi e (1+b)}}{2\alpha_n p_n^{1-k_n/n}}} < 2\alpha_n^2-1-\frac{(n-1)/n}{1+\frac{\sqrt{\pi e (1+b)}}{ p_n^{1-k_n/n}}} \\
    &=2\alpha_n^2 - 1-\left(1 - \frac{1}{n} \right)\left( 1- \frac{1}{\frac{p_n^{1-k_n/n}}{\sqrt{\pi e (1+b)}}+1}  \right)\leq \alpha_n^2-1<0.
    \end{align*}
    Thus, we have that $\lim_{n\rightarrow \infty}\delta_n=1,$ and 
    \begin{align*}
    &\lim_{n\rightarrow \infty}v_{\mathfrak{p}_n}^{\mathrm{NN}}(\delta_n\sqrt{2n\sigma_{w,n}^2}) \\
    &=  \lim_{n\rightarrow \infty} E_{\mathrm{sp}}(2\delta_n^2) - \frac{n-1}{n} \log \left( \frac{\delta_n \sqrt{2}}{\sqrt{1+b}} + \frac{\sqrt{\pi e/2}}{p_n^{1-k_n/n}} \right) \\
    &= \frac{1}{2}\log \frac{e(1+b)}{4},
    \end{align*}
   so $\inf_{u\in C_n} v_{\mathfrak{p}_n}^{\mathrm{NN}}(u,b) = E_{P}^{\mathrm{un}}(1+b)+o(1)$ in this case too. \hfill $\Box$

\section*{Appendix D} \label{D}

Note that, for any quadruple of parameters $\mathfrak{p}=(n,k,p,a),$ 
\begin{align}
\xi_{\mathfrak{p}}&= \sum_{j=0}^{k} \mathrm{Pr}(U_{\mathfrak{p}}'u_{\mathfrak{p}}=0 \, | \,\mathrm{rank}(U_{\mathfrak{p}}')=j) \mathrm{Pr}(\mathrm{rank}(U_{\mathfrak{p}}')=j) \nonumber \\
&=  \sum_{j=0}^{k}\frac{1}{p^j} \mathrm{Pr}(\mathrm{rank}(U_{\mathfrak{p}}')=j) \label{TE}
\end{align}
Before proving lemma \ref{PH}, we analyze the term $\mathrm{Pr}(\mathrm{rank}(U_{\mathfrak{p}}')=j),$ for which the following notation is convenient.

 \begin{definition}[$q-$Pochhammer Symbol]
For any $(a,q,n) \in \mathbb{R}\times \mathbb{R}\times (\mathbb{Z}\cup \lbrace \infty \rbrace),$ the $q-$Pochhammer symbol $(a;q)_n$ is defined by
$$
(a;q)_n = \left\lbrace \begin{array}{ll}
\prod_{\ell=0}^{n-1} (1-aq^\ell), &\text{if } n\geq 0 \\
\prod_{\ell=n}^{-1} (1-aq^\ell), &\text{otherwise}
\end{array} \right.
$$
whenever the product converges, and where the empty product is taken to be $1.$ When $a=q$ and $n=\infty,$ one obtains the Euler function $\phi(q)=(q;q)_\infty.$ 
\end{definition}
\begin{remark}
The Euler function will be of interest to us when $1/q$ is a prime number. One can show that, if $|q|<1,$ the Euler function is well-defined and nonzero. This is clear for $q=0,$ so assume $|q|<1$ and $q\neq 0.$ The product $\phi(q)$ is well-defined and nonzero if and only if the sum $S:=\sum_{\ell=1}^\infty \ln(1-q^\ell)$ converges. But, for each positive integer $\ell,$ we have the Taylor expansion $ \ln(1-q^\ell)=\sum_{m=1}^\infty \frac{q^{\ell m}}{m},$ so Tonelli's theorem yields that
\begin{align*}
\sum_{\ell=1}^\infty &\left|\sum_{m=1}^\infty \frac{q^{\ell m}}{m}\right|\leq \sum_{\ell=1}^\infty \sum_{m=1}^\infty \left|\frac{q^{\ell m}}{m}\right|= \sum_{m=1}^\infty \frac{1}{m} \sum_{\ell=1}^\infty |q|^{\ell m} \\
&= \sum_{m=1}^\infty \frac{1}{m(|q|^{-m}-1)} \leq \sum_{m=1}^\infty \frac{|q|^m}{1-|q|}  = \frac{|q|}{(1-|q|)^2}<\infty.
\end{align*}
Then, $S$ is absolutely convergent, so $\phi(q)$ is well-defined and nonzero. Further, $\min_{p\in \mathscr{P},n\geq 0} (1/p ; 1/p)_n = \phi(1/2)>e^{-2}.$ In fact, one may show that $ \phi(1/2)=0.288788\hdots$ .
\end{remark}

The following is a well-known fact. 

 \begin{lemma} \label{RU1}
 Fix a quadruple of parameters $\mathfrak{p}=(n,k,p,a).$  Then, $\mathrm{Pr}(\mathrm{rank}(U_\mathfrak{p}')=0)=1/p^{nk},$ and, for any integer $1\leq j \leq k,$
  \begin{align*}
  &\mathrm{Pr}(\mathrm{rank}(U_{\mathfrak{p}}')=j)\\
  &= \frac{1}{p^{nk}} \cdot \frac{(p^n-1)\cdots (p^n-p^{j-1})\cdot (p^k-1)\cdots (p^{k-(j-1)}-1)}{(p-1)\cdots (p^j-1)} \\
  &= \frac{1}{p^{(n-j)(k-j)}} \cdot \frac{\left( \frac{1}{p} ; \frac{1}{p} \right)_n\left( \frac{1}{p} ; \frac{1}{p} \right)_k}{\left( \frac{1}{p} ; \frac{1}{p} \right)_j\left( \frac{1}{p} ; \frac{1}{p} \right)_{n-j}\left( \frac{1}{p} ; \frac{1}{p} \right)_{k-j}}.
  \end{align*}
  \end{lemma}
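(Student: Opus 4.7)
The plan is to prove the first displayed expression by a direct combinatorial count of rank-$j$ matrices in $\mathbb{F}_p^{n\times k}$, and then derive the $q$-Pochhammer reformulation by elementary algebra.

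Since $U_\mathfrak{p}'$ is uniform on $\mathbb{F}_p^{n\times k}$, the probability equals $p^{-nk}$ times the number $N(n,k,j)$ of $n\times k$ matrices of rank $j$. The case $j=0$ is trivial, as only the zero matrix qualifies. For $1\le j\le k$, I would classify each such matrix by its column space, which is a $j$-dimensional subspace of $\mathbb{F}_p^n$. The number of such subspaces is the Gaussian binomial $\binom{n}{j}_p=\prod_{i=0}^{j-1}(p^n-p^i)/\prod_{i=0}^{j-1}(p^j-p^i)$, obtained by counting ordered $j$-tuples of linearly independent vectors in $\mathbb{F}_p^n$ and dividing by the number of ordered bases of a fixed $j$-dimensional space. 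For a fixed $j$-dimensional $V$, the matrices with column space equal to $V$ are in bijection with surjective linear maps $\mathbb{F}_p^k\to V$; selecting the $j$ rows of a $j\times k$ matrix representing such a surjection one at a time so as to avoid the span of the previous ones gives exactly $\prod_{i=0}^{j-1}(p^k-p^i)$ possibilities. Multiplying the two counts and factoring $p^a-p^b=p^b(p^{a-b}-1)$ to pull out the common $p^{j(j-1)/2}$ from numerator and denominator yields the first displayed form.

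For the second form, I would substitute $\left(\frac{1}{p};\frac{1}{p}\right)_m=p^{-m(m+1)/2}\prod_{\ell=1}^{m}(p^\ell-1)$ into the right-hand side. The telescopings $\prod_{\ell=1}^{n}(p^\ell-1)/\prod_{\ell=1}^{n-j}(p^\ell-1)=\prod_{\ell=n-j+1}^{n}(p^\ell-1)$ and the analogous one for $k$ reproduce the numerator of the first form, while $\prod_{\ell=1}^{j}(p^\ell-1)$ is already the required denominator. What remains is power-of-$p$ bookkeeping: the exponent contributed by the five Pochhammer symbols, combined with the explicit factor $p^{-(n-j)(k-j)}$, must collapse to $p^{-nk+j(j-1)/2}$. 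This is a one-line computation using $m(m+1)/2=\binom{m+1}{2}$ and expanding $(n-j)(k-j)=nk-nj-kj+j^2$.

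No step here is delicate; the proof is the standard Gaussian-binomial counting argument. The only place where a slip is easy to make, and where I would double-check carefully, is the power-of-$p$ accounting in the algebraic translation between the two equivalent expressions.
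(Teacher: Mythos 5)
Your proposal is correct. Note that the paper simply asserts this lemma as ``a well-known fact'' and supplies no proof at all, so any valid derivation is already more than what appears there; the Gaussian-binomial count you give (choose the $j$-dimensional column space in $\binom{n}{j}_p$ ways, then count rank-$j$ maps $\mathbb{F}_p^k\to\mathbb{F}_p^j$ as $\prod_{i=0}^{j-1}(p^k-p^i)$) is the standard argument and is sound. One small imprecision worth noting: the telescoping $\prod_{\ell=1}^{n}(p^\ell-1)/\prod_{\ell=1}^{n-j}(p^\ell-1)=\prod_{\ell=n-j+1}^{n}(p^\ell-1)$ does \emph{not} literally reproduce the first form's factor $(p^n-1)(p^n-p)\cdots(p^n-p^{j-1})$; they differ by $p^{j(j-1)/2}$, which is precisely why the target exponent is $-nk+j(j-1)/2$ rather than $-nk$. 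You flag the power-of-$p$ accounting as the delicate point, and indeed that discrepancy is exactly what has to be absorbed there; the arithmetic does close up correctly.
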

  
  Using lemma \ref{RU1}, one may prove the following bounds on $\mathrm{Pr}(\mathrm{rank}(U_{\mathfrak{p}}')=j).$
  
\begin{lemma}\label{RU}
 For any quadruple of parameters $\mathfrak{p}=(n,k,p,a),$ and any $1\leq j\leq k-1,$ we have that
  \begin{equation}\label{RU2}
  \mathrm{Pr}(\mathrm{rank}(U_{\mathfrak{p}}')=j) < \frac{1}{p^{(n-k+1)(k-j)}\phi(1/2)}
  \end{equation}
  Also, $  \mathrm{Pr}(\mathrm{rank}(U_{\mathfrak{p}}')=k)> 1- p^{k-n}.$ 
  \end{lemma}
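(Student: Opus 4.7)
The plan is to derive both bounds directly from the exact formula of Lemma \ref{RU1}, using two elementary tools: termwise telescoping of $q$-Pochhammer symbols, and the Weierstrass product inequality $\prod_\ell(1-x_\ell) \geq 1-\sum_\ell x_\ell$ for $x_\ell \in [0,1]$, which I would verify by a short induction on the number of factors.

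For inequality \ref{RU2}, I would rewrite the ratio of Pochhammer symbols appearing in Lemma \ref{RU1} by pairing numerator factors with denominator factors:
$$\frac{(1/p;1/p)_n}{(1/p;1/p)_{n-j}} = \prod_{\ell=n-j+1}^{n}(1-p^{-\ell}), \qquad \frac{(1/p;1/p)_k}{(1/p;1/p)_{k-j}} = \prod_{\ell=k-j+1}^{k}(1-p^{-\ell})$$
Both products are strictly less than $1$ (since $k-j \geq 1$), so the entire Pochhammer ratio is strictly bounded by $1/(1/p;1/p)_j$, which is at most $1/\phi(1/p) \leq 1/\phi(1/2)$ by the remark preceding the lemma (the sequence $(1/p;1/p)_j$ is decreasing in $j$ and converges to $\phi(1/p)$). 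For the prefactor exponent I would use the algebraic identity
$$(n-j)(k-j) - (n-k+1)(k-j) = (k-j)(k-j-1) \geq 0 \quad \text{for } 1 \leq j \leq k-1$$
which gives $p^{-(n-j)(k-j)} \leq p^{-(n-k+1)(k-j)}$, and combining with the strict Pochhammer bound produces inequality \ref{RU2}.

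For the second bound, applying Lemma \ref{RU1} at $j=k$ collapses the Pochhammer factors to
$$\mathrm{Pr}(\mathrm{rank}(U_{\mathfrak{p}}')=k) = \frac{(1/p;1/p)_n}{(1/p;1/p)_{n-k}} = \prod_{\ell=n-k+1}^{n}(1-p^{-\ell})$$
The Weierstrass inequality lower bounds this product by $1 - \sum_{\ell=n-k+1}^{n} p^{-\ell}$, and summing the finite geometric series gives
$$\sum_{\ell=n-k+1}^{n}p^{-\ell} = \frac{p^{k-n}-p^{-n}}{p-1} < p^{k-n}$$
where the strict inequality uses only $p \geq 2$. There is no substantive obstacle here; the only wrinkle is bookkeeping the strict inequalities, both of which come essentially for free from the conditions $k-j \geq 1$ and $p \geq 2$ respectively.
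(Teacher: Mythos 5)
Your proof is correct. For inequality \ref{RU2}, your pairing of Pochhammer factors is substantively the same monotonicity observation the paper uses --- for $m<\ell$, $\phi(1/2)\leq (1/p;1/p)_\ell < (1/p;1/p)_m$ --- and both arguments implicitly rely on the exponent comparison $(n-j)(k-j)\geq (n-k+1)(k-j)$, which your identity $(k-j)(k-j-1)\geq 0$ makes explicit (the paper leaves this step unstated). The genuine departure is in the second bound. The paper introduces $\delta_j := (1/p;1/p)_n/(1/p;1/p)_{n-j}$ and proves $\delta_k > 1-p^{k-n}$ by induction on $j$, where the inductive step rests on $(1-p^{j-n})^2 > 1-p^{j+1-n}$, which reduces to $p+p^{j-n}>2$ and hence to $p\geq 2$. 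You instead apply the Weierstrass (Bonferroni) product inequality $\prod_\ell(1-x_\ell)\geq 1-\sum_\ell x_\ell$ followed by a closed-form geometric-series estimate. Your route is more modular and interprets the bound transparently as one minus a geometric tail; the paper's induction is self-contained but ad hoc. One caveat you correctly flag: when $k=1$ the Weierstrass step is an equality, so the strictness of the conclusion rests solely on the strict estimate $\frac{p^{k-n}-p^{-n}}{p-1}<p^{k-n}$, which indeed holds for all $p\geq 2$.
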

 \begin{proof}
 Note that, for any $m< \ell,$ $\phi(1/2)\leq\left( \frac{1}{p} ; \frac{1}{p} \right)_{\ell} < \left( \frac{1}{p} ; \frac{1}{p} \right)_{m}.$ Thus, for any $1\leq j \leq k-1,$
 $$
 \frac{\left( \frac{1}{p} ; \frac{1}{p} \right)_n\left( \frac{1}{p} ; \frac{1}{p} \right)_k}{\left( \frac{1}{p} ; \frac{1}{p} \right)_j\left( \frac{1}{p} ; \frac{1}{p} \right)_{n-j}\left( \frac{1}{p} ; \frac{1}{p} \right)_{k-j}} < \frac{1}{\phi(1/2)}.
 $$ 
 Then, lemma \ref{RU1} yields \ref{RU2}.
 
 For each $1\leq j\leq k,$ let $\mathfrak{p}_j=(n,j,p,a),$ and note that $\mathfrak{p}_k=\mathfrak{p}$ and $\delta_j:=\mathrm{Pr}(\mathrm{rank}(U_{\mathfrak{p}_j}')=j)=\frac{\left( \frac{1}{p} ; \frac{1}{p} \right)_n}{\left( \frac{1}{p} ; \frac{1}{p} \right)_{n-j}}.$ We will show that $\delta_k> 1-p^{k-n}.$  First, note that $2\leq p$ implies that $\frac{2}{p} - \frac{1}{p^{n-j+1}}<1,$ so 
 $$
 1 - \frac{2}{p^{n-j}}+\frac{1}{p^{2(n-j)}} > 1 -\frac{1}{p^{n-j-1}},
 $$
 or, $(1-p^{j-n})^2>(1-p^{j+1-n})$ for any $j.$
Thus, for any $1\leq j \leq k-1,$ if we have that $\delta_j> 1-p^{j-n},$ we would also have 
$$
\delta_{j+1}=  (1-p^{j-n})\delta_j > (1-p^{j-n})^2>(1-p^{j+1-n}).
$$
As $\delta_1 = 1-p^{-n} > 1-p^{1-n},$ we see that $\delta_k > (1-p^{k-n}),$ as desired.
 \end{proof}
  
 \textit{Proof (of Lemma \ref{PH}).}
  For each $n,$ we have that $\frac{1}{p_n^{k_n}} \leq \xi_{\mathfrak{p}_n}$ and, by equation \ref{TE} and inequality \ref{RU2},
  \begin{align*}
\xi_{\mathfrak{p}_n}&< \frac{1}{p_n^{nk_n}} + \frac{1}{p_n^{k_n}}\left( 1+ \frac{1}{\phi(1/2)}\sum_{j=1}^{k_n-1} \frac{1}{\left(p_n^{(n-k_n)}\right)^{k_n-j}}\right) \\
   &< \frac{1}{p_n^{nk_n}} + \frac{1}{p_n^{k_n}}\left( 1+ \frac{1}{\phi(1/2)}\sum_{j=1}^{\infty} \frac{1}{\left(p_n^{(n-k_n)}\right)^{j}}\right) \\
  &= \frac{1}{p_n^{nk_n}} + \frac{1}{p_n^{k_n}}\left( 1+ \frac{1}{\phi(1/2)(p_n^{n-k_n}-1)}\right),
  \end{align*}
  so the desired result follows. \hfill $\Box$

 \section*{Appendix E}\label{E}

\textit{Proof (of Lemma \ref{U}).} Note that, for any $M\in \mathbb{F}_p^{n\times k},$ 
$$
\mathrm{Pr}(U_\mathfrak{p}'=M|U_\mathfrak{p}'\in M_\mathfrak{p})=\frac{1}{|M_\mathfrak{p}|}\cdot 1_{M_\mathfrak{p}}(M)=\mathrm{Pr}(U_\mathfrak{p}=M).
$$
 Hence, lemma \ref{RU} implies that
\begin{align*}
 \mathbb{E}_{U_\mathfrak{p}'}\left[ f(U_\mathfrak{p}')\right] &=   \mathbb{E}_{U_\mathfrak{p}'}\left[ f(U_\mathfrak{p}')|U_\mathfrak{p}'\in M_\mathfrak{p}\right]\mathrm{Pr}(U_\mathfrak{p}'\in M_\mathfrak{p}) \\
 &+ \mathbb{E}_{U_\mathfrak{p}'}\left[ f(U_\mathfrak{p}')|U_\mathfrak{p}'\not\in M_\mathfrak{p}\right]\mathrm{Pr}(U_\mathfrak{p}'\not\in M_\mathfrak{p}) \\
 &\geq \mathbb{E}_{U_\mathfrak{p}'}\left[ f(U_\mathfrak{p}')|U_\mathfrak{p}'\in M_\mathfrak{p}\right] (1-p^{k-n}) \\
 &= \mathbb{E}_{U_\mathfrak{p}}\left[ f(U_\mathfrak{p})\right](1-p^{k-n}).
\end{align*}
\hfill $\Box$

\section*{Appendix F}\label{F}

Note that, for any positive integer $n$ and $\tau>0,$
$$
\Theta_{\mathbb{Z}^{n}}(\tau)=\left( \theta(0,i\tau) \right)^n,
$$
where $\theta(0,i\tau):=\sum_{z\in \mathbb{Z}} e^{-\pi \tau z^2}$ is the Jacobi theta function. \\

\textit{Proof (of Lemma \ref{RT}).} First, note that, for every $n,$
  \begin{align*}
 1+\frac{2}{e^{\pi c_n}}&< \theta(0,ic_n) = 1+ 2 \sum_{z=1}^\infty e^{-\pi c_n z^2} <1+ 2\sum_{z=1}^\infty e^{-\pi c_n z} \\
 &= 1+ \frac{2}{e^{\pi c_n}-1}.
  \end{align*}
  Now, assume that $n=o(e^{\pi c_n})$ as $n\longrightarrow \infty.$ Then, $\lim_{n\rightarrow \infty} \frac{n}{e^{\pi c_n}-1}=0,$ and for all large $n$ 
  \begin{align*}
  1 &< (\theta(0,ic_n))^n < \left(\left(  1+ \frac{2}{e^{\pi c_n}-1} \right)^{(e^{\pi c_n}-1)/2}\right)^{2n/(e^{\pi c_n}-1)} \\
  &< e^{2n/(e^{\pi c_n}-1)}.
  \end{align*}
  Hence, $\lim_{n\rightarrow \infty} (\theta(0,ic_n))^n =1.$ 
  
  For the converse, assume that $\lim_{n\rightarrow \infty} (\theta(0,ic_n))^n =1.$ Then, 
  $$
  \lim_{n\rightarrow \infty} \left( \left( 1+ \frac{2}{e^{\pi c_n}} \right)^{e^{\pi c_n}/2}\right)^{2n/e^{\pi c_n}}=1.
  $$
  Thus, $\lim_{n\rightarrow \infty} c_n=\infty.$ Hence, for all large $n,$
  $$
  2^{2n/e^{\pi c_n}}<\left( \left( 1+ \frac{2}{e^{\pi c_n}} \right)^{e^{\pi c_n}/2}\right)^{2n/e^{\pi c_n}},
  $$
  implying that $n=o(e^{\pi c_n}).$ \hfill $\Box$

\end{document}